\newif\ifllncs

\ifllncs
    \documentclass{llncs}
\else
    \documentclass{article}
    \usepackage{fullpage}
    \usepackage{palatino} 
    \usepackage{mathpazo} 
\fi

\usepackage[T1]{fontenc}
\usepackage{amsmath,amssymb,mathtools}
\usepackage{amsthm}
\usepackage{needspace}
\usepackage{booktabs,tabularx}
\usepackage{xcolor}
\usepackage[most]{tcolorbox}
\usepackage[colorlinks=true]{hyperref}
\usepackage[nameinlink,noabbrev]{cleveref}
\usepackage{dsfont}
\hypersetup{
  pdftitle={Quantum Time-Lock Puzzles in the Quantum Random Oracle Model},
  pdfsubject={Quantum time-lock puzzles and oracle-depth security},
  bookmarksnumbered=true,
  linkcolor=blue, citecolor=magenta, urlcolor=blue}
\usepackage[
  backend=biber,
  backref=true,
  style=alphabetic,
  maxalphanames=5,
  maxbibnames=99,
  doi=false,
  url=false
]{biblatex}
\makeatletter
\let\qtlp@cite\@cite
\makeatother

\allowdisplaybreaks

\newtcolorbox{protocol}[2][]{enhanced,breakable,lines before break=8,
  colback=white,colframe=black,colbacktitle=white,coltitle=black,
  title={#2},
  boxrule=.4pt,arc=0pt,left=5pt,right=5pt,top=4pt,bottom=4pt,
  toptitle=4pt,bottomtitle=3pt,titlerule=.4pt,
  before skip=\medskipamount,after skip=\medskipamount,
  title after break={#2\enspace\textnormal{(continued)}},
  before upper={\setlength{\parindent}{0pt}},#1}

\newcommand{\ket}[1]{\lvert #1\rangle}
\newcommand{\bra}[1]{\langle #1\rvert}
\newcommand{\braket}[2]{\langle #1\mid #2\rangle}
\newcommand{\proj}[1]{\ket{#1}\!\bra{#1}}
\newcommand{\Tr}{\operatorname{Tr}}
\newcommand{\E}{\mathbb{E}}
\newcommand{\N}{\mathbb{N}}

\newcommand{\Had}{\mathrm H}
\newcommand{\PGen}{\mathsf{PGen}}
\newcommand{\PSolve}{\mathsf{PSolve}}

\newcommand{\puz}{Z}
\newcommand{\sol}{s}
\newcommand{\msk}{\mathsf{msk}}

\newcommand{\adversary}{\mathcal{A}}
\newcommand{\Real}{\mathsf{Real}}

\newcommand{\Hyb}{\mathsf{Hyb}}

\newcommand{\reg}[1]{\textcolor{gray}{\mathbf{#1}}}
\newcommand{\secp}{\lambda}
\newcommand{\negl}{\mathrm{negl}}

\ifllncs
    \spnewtheorem{construction}[theorem]{Construction}{\bfseries}{\rmfamily}

    \spnewtheorem{claim}[theorem]{Claim}{\bfseries}{\itshape}
\else
    \newtheorem{theorem}{Theorem}[section]
    
    \newtheorem{definition}[theorem]{Definition}
    \newtheorem{lemma}[theorem]{Lemma}
    \newtheorem{claim}[theorem]{Claim}
    \crefname{claim}{claim}{claims}
    \Crefname{claim}{Claim}{Claims}

    \newtheorem{corollary}[theorem]{Corollary}

    \crefname{fact}{fact}{facts}
    \Crefname{fact}{Fact}{Facts}
    \newtheorem{construction}[theorem]{Construction}
    \crefname{construction}{construction}{constructions}
    \Crefname{construction}{Construction}{Constructions}

\fi

\title{Quantum Time-Lock Puzzles in the Quantum Random Oracle Model}
\author{%
  \begin{tabular}{@{}c@{\hspace{3em}}c@{}}
    {\large Prabhanjan Ananth} &
    {\large Yao-Ting Lin}\thanks{This work was done while the author was a PhD student at UCSB.} \\[3pt]
    {\small\texttt{prabhanjan@cs.ucsb.edu}} &
    {\small\texttt{yl6463@columbia.edu}} \\[2pt]
    UCSB & Columbia University
  \end{tabular}%
}
\date{}

\begin{document}
\maketitle
\thispagestyle{plain}

\begin{abstract}
A time-lock puzzle allows a sender to hide a message in a puzzle such that recovering the message requires substantially more sequential computation than the time required to generate the puzzle, even when parallel computation is allowed. Applications of time-lock puzzles include timed-release encryption, sealed-bid auctions, electronic voting, fair contract signing, coin flipping, and Byzantine consensus. However, time-lock puzzles are known to be impossible in the classical random oracle model. 

To overcome the classical barrier, in this work we consider quantum time-lock puzzles, in which the puzzle itself is a quantum state. Our construction in the quantum random oracle model achieves generation in one oracle round, solving in at most $T$ oracle rounds, and security against polynomial-width quantum adversaries of depth $o(T)$ for every polynomially bounded delay $T=T(\secp)$, resolving an open problem posed by Mahmoody, Moran, and Vadhan (2011).
\end{abstract}

\section{Introduction}
\label{sec:introduction}

The time-lock puzzles (TLPs) we study allow a sender to hide a message in time polynomial in the security parameter $\secp$ and $\log T$, while recovering it requires a prescribed amount $T$ (the \emph{delay parameter}) of \emph{sequential} computation, even with polynomially many parallel processors. Introduced by Rivest, Shamir, and Wagner~\cite{RSW96}, TLPs enable timed-release encryption for sealed-bid auctions and delayed access to confidential records. Time-lock techniques also underlie timed commitments for fair contract signing and three-round zero-knowledge protocols in the timing model~\cite{BN00}. Homomorphic TLPs support electronic voting~\cite{MT19}, while non-malleable TLPs yield fair non-interactive multiparty coin flipping and auctions~\cite{FKPS21}. Batchable TLPs have also found applications to Byzantine consensus~\cite{SLMNPT23}. Known positive results, including variants with preprocessing or sublinear generation time, fall into the following categories:
\begin{itemize}
    \item \textbf{Algebraic assumptions.} Constructions rely on sequential squaring in RSA groups~\cite{RSW96}; sequential squaring together with hard subgroup membership in imaginary-quadratic class groups~\cite{TCLM21}; or the bilinear isogeny shortcut assumption~\cite{BDF21}. The latter two approaches require preprocessing or setup, and the isogeny construction also uses random oracles.
    \item \textbf{Post-quantum assumptions.} Lattice-based constructions use circular small-secret LWE together with an iteratively sequential function~\cite{AMZ24}, or circular LWE together with bounded-space worst-case non-parallelizing languages~\cite{BG25}.\footnote{Ignoring security-parameter factors,~\cite{AMZ24} achieves $\sqrt{T}$ generation time in the plain model, or polylogarithmic online generation after preprocessing;~\cite{BG25} achieves $T^\varepsilon$ generation for any constant $\varepsilon>0$. Earlier preprocessing variants from subexponential LWE appear in~\cite{BGJPVW16}, with hiding time measured from publication of the preprocessing parameters.} Decomposed LWE and Ring-LWE yield fully succinct randomized encodings~\cite{AMR25}, which can be combined with the generic compiler below to obtain another construction. These routes are plausibly post-quantum secure when both the lattice and sequentiality assumptions hold against quantum algorithms.
    \item \textbf{Generic assumptions.} Bitansky et al.~\cite{BGJPVW16} construct TLPs from succinct randomized encodings and the existence of worst-case non-parallelizing languages. Instantiating the encodings using indistinguishability obfuscation gives puzzles with generation time polylogarithmic in $T$.
\end{itemize}
We do not know how to construct TLPs from symmetric-key assumptions alone. Mahmoody, Moran, and Vadhan~\cite{MMV11} show that, in the classical random oracle model, any time-lock puzzle generated using $q$ oracle queries can be solved in $O(q)$ sequential rounds of random-oracle queries (with polynomially many parallel queries allowed in each round). This rules out the required generation--solving gap\footnote{That is, the sequential time required to solve a puzzle should be significantly greater than the time required to generate it.} and yields black-box separations from one-way permutations and collision-resistant hash functions.

Quantum resources have enabled a line of works to overcome classical impossibility results. A prominent example is \emph{unclonable cryptography}, which uses quantum states to realize cryptographic tasks that are impossible to achieve classically~\cite{Wie83,Aar09}. For time-lock puzzles, Mahmoody, Moran, and Vadhan~\cite[Section~4]{MMV11} left open whether their impossibility result extends to quantum honest parties: 
\begin{quote}
   {\em ... On the other hand, when the honest parties are quantum, the lower bound question is still open as well... }
\end{quote}
This open question has not been resolved in the past 15 years. 
\par In this work, we study \emph{quantum time-lock puzzles} (QTLPs), in which the puzzle itself can be a quantum state. A QTLP consists of two quantum algorithms:
\begin{itemize}
    \item \textbf{Generator ($\PGen$).} Takes a security parameter $1^\secp$, a delay parameter $T$, and a solution bit $\sol$, and produces a quantum puzzle $\ket{\puz}$.
    \item \textbf{Solver ($\PSolve$).} Takes the puzzle and recovers $\sol$.
\end{itemize}
We require the following properties:
\begin{itemize}
    \item \textbf{Correctness.} $\PSolve$ recovers the bit except with negligible probability.
    \item \textbf{Efficiency.} Generation takes time polynomial in $\secp$ and $\log T$, while solving takes time polynomial in $\secp$ and $T$.
    \item \textbf{Sequential security.} Any adversary running in sequential time $o(T)$ only has negligible advantage in distinguishing puzzles for the two solution bits, even with polynomially (in $T$) many parallel processors.
\end{itemize}
In the quantum random oracle model (QROM), all the parties may query a random function in superposition, and the delay is measured in terms of rounds of parallel oracle queries: honest solving uses at most $T$ rounds, while security is required against adversaries using $r = o(T)$ rounds. This leads us to ask:
\begin{quote}
\emph{Can we construct quantum time-lock puzzles in the quantum random oracle model?}
\end{quote}

\subsection{Our Results}
Our construction uses a product of single-qubit BB84 states and random oracles to obtain an efficient generator and a sequential quantum solver. We state the parameters of a simple instantiation of \Cref{con:lock} below.

\begin{theorem}[Informal]
\label{thm:main-informal}
There is a quantum time-lock puzzle scheme in the QROM that satisfies the following guarantees for any polynomial delay $T = T(\secp)$:
\begin{itemize}
    \item \textbf{Generation.} The puzzle can be generated in time $\mathrm{poly}(\secp,\log T)$ using $\secp+1$ classical oracle queries in one parallel round. In particular, the puzzle contains $O(\secp^2\log T)$ single-qubit BB84 states and $\secp^2+1$ classical bits.
    \item \textbf{Solving.} The puzzle can be solved in time $\mathrm{poly}(\secp,T)$ using at most $T$ oracle rounds, with at most $\secp$ quantum queries per round and negligible correctness error.
    \item \textbf{Security.} For every polynomial $p=p(\secp)$ and every $r=o(T)$, any computationally unbounded adversary given one copy of the puzzle and making at most $r$ rounds of $p$-parallel quantum oracle queries guesses the uniformly random solution bit with probability at most $\tfrac12+\mathrm{negl}(\secp)$.
\end{itemize}
\end{theorem}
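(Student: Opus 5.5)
The plan is to prove \Cref{thm:main-informal} directly from \Cref{con:lock}, checking generation, solving and security one at a time. The first two are bookkeeping. Security is where the work lies.

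\emph{Structure of the construction I will rely on.} I will read \Cref{con:lock} as $\secp$ independent sub-puzzles. Sub-puzzle $i$ has a short secret $k_i$ of $O(\log T)$ bits. The oracle $H$ maps $k_i$ to a basis string $\theta_i$ and a string $x_i$, each of length $O(\secp\log T)$, and these define the BB84 block $\ket{x_i}_{\theta_i}$. The $\secp^2$ classical bits are per-sub-puzzle check/pad values, and the final bit is $\sol\oplus H(\text{all }k_i)$. This accounts for the $\secp+1$ queries. If the actual construction differs, I will adapt the same skeleton.

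\emph{Generation and solving.}
\begin{itemize}
\item \textbf{Generation.} The $\secp+1$ classical queries are made in a single parallel round. The generator then prepares $O(\secp^2\log T)$ BB84 qubits in time $\mathrm{poly}(\secp,\log T)$.
\item \textbf{Solving.} The solver runs the $\secp$ sub-puzzles side by side, one query per sub-puzzle per round. In each sub-puzzle it steps through the candidates $k$ in order. For each candidate it queries $H(k)$, applies a gentle test of whether the block is consistent with $(\theta,x)=H(k)$, and rotates back.
\item \textbf{Correctness.} A wrong candidate passes the test only with probability $2^{-\Omega(\secp\log T)}$. By the gentle-measurement lemma, a rejected test disturbs the block by at most a negligible amount. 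A union bound over at most $T$ candidates therefore gives negligible correctness error within $T$ rounds.
\end{itemize}

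\emph{Security: reduction to one sub-puzzle.} Using the final-bit structure and a one-way-to-hiding argument, the adversary's advantage is bounded by the probability that it queries all $k_i$'s, or equivalently produces a string consistent with them. The sub-puzzles are independent. So it suffices to show that in one sub-puzzle an $r$-round, $p$-parallel adversary recovers $k_i$ with probability at most some $\epsilon<1$, bounded away from $1$. Independence then drives the joint success to $\epsilon^{\secp}$, which is negligible.

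\emph{Security: one sub-puzzle.} For a single sub-puzzle I would work in Zhandry's compressed-oracle picture and track a potential $\Phi_t$: the largest amount of weight the adversary's state, after $t$ rounds, can place on having ``verified'' a candidate against the single copy of the block.
\begin{itemize}
\item \textbf{Per-round bound.} I aim to show that one round raises $\Phi_t$ by at most $O(1/T)$ plus negligible terms, no matter how large $p$ is. Telescoping over $r=o(T)$ rounds then gives $\Phi_r=o(1)$.
\item \textbf{Key tool.} The per-round bound rests on an uncertainty or monogamy-of-entanglement argument for BB84 states, as in Tomamichel--Fehr--Kaniewski--Wehner. Without knowing $\theta$, any procedure that extracts information confirming one candidate basis must disturb the block in the conjugate positions. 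So the parallel queries of a single round cannot all exploit the same copy. Their combined information is bounded by one ``effective test'' per round, with the $p$ parallel tests sharing a budget instead of adding up.
\end{itemize}

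\emph{Main obstacle.} The per-round bound is the hard step. Parallel superposition queries can entangle the block with many candidate registers at once, so the naive union bound over $p$ queries fails. My first attempt would be a semidefinite-programming or operator-norm bound on the sum of the $p$ projectors onto ``candidate $k$ is consistent with the block'', showing it has norm $O(1)$ rather than $O(p)$. Failing that, I would fall back on a hybrid argument that reprograms $H$ on all untested candidates after each round, in the style of the blinding or measure-and-reprogram lemmas.
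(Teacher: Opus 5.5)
Your security argument cannot go through, because the scheme you build it on is not \Cref{con:lock}, and as you describe it, it is insecure. Suppose both $\theta_i$ and $x_i$ are read off $H(k_i)$. Then the adversary can spend one parallel round querying $H$ classically on all $\mathrm{poly}(T)$ candidates $k$. This never touches the block, yet it reveals every candidate state $\ket{x(k)}_{\theta(k)}$. These states are pairwise nearly orthogonal (expected squared overlap $2^{-m}$ for length-$m$ blocks). So one pretty-good measurement on the single copy identifies each $k_i$ with probability $1-\negl(\secp)$, and a second round queries the mask.

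Your monogamy intuition fails for this reason: nothing is ever tested against the block, because the candidates are publicly computable and one discriminating measurement suffices. For the same reason, your ``first attempt'' operator-norm bound $\|\sum_{k\in L}P_k\|\leq1+(p-1)\varepsilon$ is true but only yields $\Pr[k\in L]\leq(1+(p-1)\varepsilon)2^m/N$, which is vacuous.

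The missing idea is how \Cref{con:lock} uses the oracle. The string $x_i\gets\{0,1\}^n$ is fresh randomness independent of all oracles. The basis is the codeword $\mathcal C_{\secp,k}(a_i)$ of a distance-$\secp$ code. The oracle enters only through the verification tag $z_i=H_i(a_i,x_i)$, which the solver checks coherently after decoding. As a result, the hidden pair is uniform over $2^{k+n}$ labels, the $2^{-n}$ normalization cancels $\Tr(I_{2^n})$, and \Cref{lem:hidden-pair-list} gives $(1+(p-1)\varepsilon)/2^k$. Moreover, before the reprogrammed round, $H$ and $z$ are independent of $(a,x)$ and can be simulated from the token alone. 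This is what makes the hybrid of \Cref{lem:hidden-pair-list-oracle} work. When the oracle itself determines the state, as in your design, no such decoupling exists.

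Even with that fixed, your parameters do not reach $r=o(T)$. With a linear search over at most $T$ candidates, an adversary can run Grover using the solver's own coherent check and succeed in $O(\sqrt T)=o(T)$ rounds. The generic per-round gain is in amplitude, $2\sqrt{\delta}$ with $\delta\approx(1+(p-1)\varepsilon)/N$, not $O(1/T)$ in probability. The paper therefore takes $N=2^k=\Theta(T^2)$. Its honest solver runs exact amplitude amplification (\Cref{lem:exact-amplification}) from a biased initial state for $g=\lceil T/2\rceil-1$ iterations of two rounds each. Ordinary Grover with a rounded iteration count would leave inverse-polynomial error, and your gentle-measurement-plus-union-bound argument only covers linear search. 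With these parameters, the per-round cost $2\sqrt\delta$ becomes $O(1/T)$, so $r$ rounds cost $O(r/T)=o(1)$.

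Finally, ``independence drives the joint success to $\epsilon^\secp$'' is not automatic against an adversary who entangles its queries and processing across sub-puzzles. The paper uses independent oracles $H_i$ and Gutoski's parallel-repetition theorem (\Cref{cor:hidden-pair-parallel-repetition}) before the final one-way-to-hiding step for the mask.
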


Afshar, Chung, Hsieh, Lin, and Mahmoody~\cite{ACHLM23} show that, in the QROM, any classical puzzle generated using $q$ classical random-oracle queries can be solved in $O(q)$ sequential rounds of random-oracle queries (with polynomially many parallel queries allowed in each round), even when the honest solver is allowed to make quantum queries. They also rule out perfectly correct schemes with a quantum generator and a classical solver. Their results leave open the setting in which the puzzle itself is quantum and the honest solver is quantum.

Our construction crucially uses quantum puzzles to bypass the impossibility results for classical puzzles. In particular, the attacks in~\cite{MMV11,ACHLM23} rely on running the honest solver multiple times on the same classical puzzle. These attacks therefore do not directly apply to our setting, in which the adversary receives only a single copy of the quantum puzzle. Moreover, our generator requires only classical oracle queries.

\paragraph{AI disclosure.}
We used OpenAI's ChatGPT (GPT-5.6 Sol Pro) and Codex (with the ultra reasoning setting) to assist with developing the construction, revising proofs, and preparing the manuscript. Our starting point was a human-proposed construction whose puzzle-generation time was linear in $T$ and a candidate construction in the Haar cipher model\footnote{By the Haar cipher model we mean oracle access to $U = \sum_{a} \proj{a} \otimes U_a$, where $U_a$ are independently sampled Haar-random unitaries.} that captures the main ideas of our actual construction, though we did not have a proof for it. After we supplied these constructions to ChatGPT, it proposed an approach with generation time polylogarithmic in $T$ in the QROM. The initial proposal from Codex used mutually unbiased bases (MUBs)~\cite{WF89}, whose instantiation was involved and required more sophisticated tools. Seeking an elementary proof, we iterated with ChatGPT to obtain the conceptually simpler construction using only BB84 states presented here.

An earlier version of \Cref{lem:hidden-pair-list} had a structurally different proof. After identifying its connection to monogamy-of-entanglement games, we revised the proof to use an operator inequality from~\cite{TFKW13} as a black box.

We also used these tools to draft and revise text throughout the manuscript. The authors substantially edited every section and take responsibility for the final manuscript, including its mathematical claims and proofs.

\subsection{Technical Overview}

\paragraph{Classical approach.}
We begin with a natural classical approach to develop intuition for our quantum construction. Although the known impossibility results rule out secure time-lock puzzles in the classical random oracle model, understanding why this approach fails suggests how quantum resources might help.

For a desired integer delay $T$, a natural classical generator samples $a\gets[T]$ uniformly and publishes $G(a)$, where $G:[T]\to\{0,1\}^{\secp}$ is a random oracle and $\secp$ is the security parameter. We consider delays $T=T(\secp)$ bounded by a polynomial in $\secp$. Generation requires one query, while a sequential search takes up to $T$ queries. However, an adversary can {\em copy} the puzzle, partition the candidates among many processors, and test them in parallel. Since $T$ is polynomially bounded, a polynomial-width adversary can even test all candidates in one round. We therefore seek a puzzle whose use in one search cannot simply be replicated across arbitrarily many independent searches.

\paragraph{First attempt: using an unclonable state.}
Quantum states suggest a way to prevent this copying attack. In particular, if the puzzle is an unclonable quantum state, then the above copying attack would fail. The unclonable quantum state will be defined with respect to a family of unitaries defined next. Let $\{U_a\}_{a\in[N]}$ be a public family of efficiently computable unitaries acting on $n$ qubits. We postpone specifying the unitaries and the token length $n$. We set $N=\Theta(T^2)$. Jumping ahead, the honest solver runs a variant of Grover search with search space $[N]$, which requires $O(\sqrt{N}) = O(T)$ rounds of queries.

The generator samples independent, uniformly random $a\gets[N]$ and $x\gets\{0,1\}^n$, and prepares exactly one copy of the token
\[
  \ket\psi=U_a\ket x.
\]
We initially set aside the solution bit and consider a weaker recovery task: the honest solver should recover the hidden pair $(a,x)$, while an adversary using $o(T)$ rounds of polynomially many parallel queries should have recovery probability bounded away from $1$. We will later amplify this guarantee and convert it into indistinguishability security.

\paragraph{Correctness.}
How should the honest solver proceed? A first idea is to try the candidate bases one at a time: for each $a'\in[N]$, apply $U_{a'}^\dagger$ to the token and measure in the computational basis. When $a'=a$, this returns $x$. But the solver does not know $x$, so it cannot recognize a successful attempt. To provide a verification procedure, the generator additionally publishes a tag
\[
  z := H(a,x),\qquad H:\{0,1\}^* \to \{0,1\}^{\secp},
\]
where $H$ is a random oracle. A similar idea appears in the copy-protection scheme for point functions by Coladangelo, Majenz, and Poremba~\cite[Construction~3]{CMP20}. To test a candidate $a'$, the solver applies $U^\dagger_{a'}$ to $\ket{\psi}$ and coherently checks whether the resulting value $x'$ satisfies $H(a',x')=z$. When $a'\neq a$, the check rejects with probability $1-2^{-\secp}$, averaged over the random oracle, since $H(a',\cdot)$ is independent of the tag $z$. By the gentle measurement lemma, this check causes only a negligible disturbance. The solver then applies $U_{a'}$ to undo the basis change and approximately restore the original puzzle state. Since there are only $N=\mathrm{poly}(\secp)$ candidates, the solver can reuse the same quantum puzzle to test successive candidates until one passes, with only negligible error.

A careful reader might notice that one can use Grover search or amplitude amplification to speed up the above honest solver algorithm. However, since the search space $N$ is polynomial, a naive attempt might cause an inverse-polynomial error. In more detail, let us consider Grover search. Suppose the angle between the initial state and the unmarked subspace (orthogonal to the target state) in Grover search is $\phi$. Then, after $g$ Grover iterations, the resulting state has amplitude $\sin((2g+1)\phi)$ on the target state. If $\tilde{g} := \frac{\pi}{4\phi} - \frac{1}{2}$ is an integer, then after $\tilde{g}$ iterations, the state will land entirely on the target state. Otherwise, after $\lceil\tilde{g}\rceil$ iterations, the success probability $\sin^2((2\lceil\tilde{g}\rceil+1)\phi)$ could be $1 - 1/\mathrm{poly}(\secp)$. Therefore, our construction instead uses an “exact” variant of Grover search introduced in~\cite{BHMT02}. We explain the idea in what follows. We prepare a slightly ``worse'' initial state whose angle with the unmarked subspace, $\theta := \frac{\pi/2}{2\lceil\tilde{g}\rceil+1}$, is slightly smaller than $\phi = \frac{\pi/2}{2\tilde{g}+1}$. By design, the new angle $\theta$ satisfies $(2\lceil\tilde{g}\rceil+1)\theta = \pi/2$. Therefore, after $\lceil\tilde{g}\rceil \leq \tilde{g}+1 = O(T)$ Grover iterations, we can recover the target exactly. A concrete initialization is the following.
Initialize the search register $\reg A$ in
\[
  \ket{\mathsf{Init}}=\sqrt{1-N\sin^2\theta}\,\ket\bot
  +\sin\theta\sum_{a'\in[N]}\ket{a'},
\]
where $\bot$ is a dummy label. Registers $\reg B$ and $\reg R$ hold $\ket\psi=U_a\ket x$ and $\ket{0^{\secp}}$, respectively. To explain the first verification operation, write
\[
  \begin{aligned}
  U_{a'}^\dagger\ket\psi := \sum_y \alpha_{a',y} \ket y ,
  \qquad \ket{\mathsf{Dummy}} :=\sqrt{1-N\sin^2\theta}\,
  \ket\bot_{\reg A}\ket\psi_{\reg B}\ket{0^{\secp}}_{\reg R}.
  \end{aligned}
\]
In particular, $U_a^\dagger\ket{\psi}=\ket{x}$, so $\alpha_{a,y}=1$ if $y=x$ and $\alpha_{a,y}=0$ otherwise. Below, sums range over $a'\in[N]$ and $y\in\{0,1\}^n$; the dummy branch is unchanged.
\begin{itemize}
  \item Start with 
  \[
  \ket{\mathsf{Init}}_{\reg A} \ket{\psi}_{\reg B} \ket{0^{\secp}}_{\reg R}.
  \]
  \item Apply $U_{a'}^\dagger$ to $\reg B$, controlled by $\reg A$:
  \[
    \ket{\mathsf{Dummy}}+\sin\theta\sum_{a',y}\alpha_{a',y}
    \ket{a'}_{\reg A}\ket y_{\reg B}\ket{0^{\secp}}_{\reg R}.
  \]
  \item Compute the hash $H(a',y)$ in $\reg R$:
  \[
    \ket{\mathsf{Dummy}}+\sin\theta\sum_{a',y}\alpha_{a',y}
    \ket{a'}_{\reg A}\ket y_{\reg B}\ket{H(a',y)}_{\reg R}.
  \]
  \item Flip the phase on non-dummy branches when $H(a',y)=z$:
  \[
    \ket{\mathsf{Dummy}}+\sin\theta\sum_{a',y}(-1)^{[H(a',y)=z]}\alpha_{a',y}
    \ket{a'}_{\reg A}\ket y_{\reg B}\ket{H(a',y)}_{\reg R}.
  \]
  \item Uncompute the hash, resetting $\reg R$:
  \[
    \ket{\mathsf{Dummy}}+\sin\theta\sum_{a',y}(-1)^{[H(a',y)=z]}\alpha_{a',y}
    \ket{a'}_{\reg A}\ket y_{\reg B}\ket{0^{\secp}}_{\reg R}.
  \]
  \item Undo the basis change by applying controlled $U_{a'}$ to $\reg B$:
  \begin{align*}
    & \ket{\mathsf{Dummy}}+\sin\theta\sum_{a',y}(-1)^{[H(a',y)=z]}\alpha_{a',y}
    \ket{a'}_{\reg A}\bigl(U_{a'}\ket y\bigr)_{\reg B}
    \ket{0^{\secp}}_{\reg R} \\
    = & \ket{\mathsf{Dummy}} + \sin\theta\sum_{a'} \ket{a'}_{\reg A} \otimes \underbrace{\bigl(U_{a'} \sum_y (-1)^{[H(a',y)=z]} \alpha_{a',y} \ket y_{\reg B} \bigr)}_{\ket{\Psi_{H,a'}}}
    \ket{0^{\secp}}_{\reg R}
  \end{align*}
\end{itemize}
Since $H(a,x)=z$, the correct branch satisfies
\[
  \ket{\Psi_{H,a}}=-U_a\ket{x}=-\ket{\psi},
\]
so the token is restored with phase $-1$. For an incorrect branch $a'\neq a$, we show that $\ket{\Psi_{H,a'}}$ is negligibly close to $\ket{\psi}$ on average over $H$. Indeed,
\[
  \ket{\Psi_{H,a'}}-\ket{\psi}
    =-2U_{a'}\sum_{y:H(a',y)=z}\alpha_{a',y}\ket{y},
\]
where we used $U_{a'}\sum_y\alpha_{a',y}\ket{y} =U_{a'}U_{a'}^\dagger\ket{\psi}=\ket{\psi}$. Since the oracle row $H(a',\cdot)$ is independent of the token and the tag $z$, each $y$ satisfies $H(a',y)=z$ with probability $2^{-\secp}$. Therefore,
\[
  \mathbb{E}_H\!\left[
    \bigl\|\ket{\Psi_{H,a'}}-\ket{\psi}\bigr\|_2^2
  \right]
  =4\sum_y|\alpha_{a',y}|^2
    \Pr_H[H(a',y)=z]
  =4\cdot 2^{-\secp}.
\]
Thus, on the input state above, this verification operation approximates the ideal Grover phase oracle, which negates the correct branch and leaves all other branches unchanged. Averaging over all branches, its expected squared error is
\[
  4(N-1)\sin^2\theta\cdot 2^{-\secp}
  \leq 4\cdot 2^{-\secp} 
  = \mathrm{negl}(\secp).
\]
The solver alternates verification with reflection about $\ket{\mathsf{Init}}$ for $g$ iterations, using $2g$ rounds. It then measures $\reg A$, aborts if the outcome is $\bot$, and otherwise measures $\reg B$. We bound the accumulated error and show that the solver recovers the hidden pair $(a,x)$ except with negligible probability. See~\Cref{lem:correctness} for details.

\paragraph{Security.} We next address security. We require the set of $n$-qubit unitaries $\{U_a\}_{a \in [N]}$ to satisfy the following ``incompatible'' condition
\[
  \bigl|\bra x U_a^\dagger U_{a'}\ket{x'}\bigr|
  \leq \varepsilon
  \qquad \text{whenever }(a,x) \neq (a',x') \in [N] \times \{0,1\}^n,
\]
where $n \approx \secp\log(N)$ and $\varepsilon = \mathrm{negl}(\secp)$. We will explain how to construct such sets efficiently in a later paragraph. In what follows, we gradually strengthen the security in different settings.

\paragraph{(i)~The list-one-wayness lemma in the plain model.}  We first consider the following experiment in the plain model. The list-one-wayness lemma (\Cref{lem:hidden-pair-list}) states that given the description of $\{U_a\}_{a \in [N]}$ and one copy of $U_a \ket x$ with $(a,x)$ chosen uniformly at random, any computationally unbounded adversary outputting a list $L$ of $p$ candidate pairs satisfies
\begin{equation} \label{eq:list_recovery_bound}
  \Pr[(a,x) \in L] 
  \leq \delta_{p,\varepsilon,N} := \frac{1+(p-1)\varepsilon}{N}.
\end{equation}
Crucially, the dependency on $p$ is suppressed by $\varepsilon$. \Cref{eq:list_recovery_bound} is the key for establishing sequentiality, and we crucially exploit the incompatibility of the bases and the property of quantum states. Consider the following classical analogue in which the unitaries are replaced by permutations $\{\pi_a\}_{a\in [N]}$. Note that $\varepsilon = 1$ in this case. Given $\pi_a(x)$, the solver can guess a permutation in each branch, say $a_i$, evaluate $x_i = \pi^{-1}_{a_i}(\pi_a(x))$, and output $(a_i,x_i)$. Whenever $a_i=a$, we also have $x_i=x$. Thus, the resulting list contains $(a,x)$ with probability $p/N$, unlike~\Cref{eq:list_recovery_bound}.

To show~\Cref{eq:list_recovery_bound}, we first express the density matrix received by the adversary in terms of sums of projectors. Define: 
\[
  P_{a,x}=U_a\proj x U_a^\dagger,
  \qquad A_L=\sum_{(a,x)\in L}P_{a,x}.
\]
Ultimately, upper bounding the operator norm of $A_L$ will help us prove an upper bound on $\Pr[(a,x) \in L]$. Proving upper bounds on sums of operators shows up in monogamy of entanglement games, which in turn have been influential for unclonable cryptography. In particular, we rely upon the operator inequality of Tomamichel, Fehr, Kaniewski, and Wehner~\cite[Lemma~2]{TFKW13}, which, when specialized to projectors and cyclic permutations, gives
\[
  \begin{gathered}
  \left\|\sum_{j=1}^{p} P_j\right\|
  \leq\sum_{t=0}^{p-1} \max_{j\in[p]}
  \left\|P_j P_{\pi_t(j)}\right\|,\\
  \pi_t(j)=1+((j-1+t) \bmod p).
  \end{gathered}
\]
The identity permutation contributes $1$. Every other permutation pairs distinct projectors and therefore contributes at most $\varepsilon$. Consequently,
\[
  \|A_L\| \leq 1 + (p-1)\varepsilon.
\]
Now describe the adversary's arbitrary measurement by a POVM $\{M_L\}$, whose outcome specifies its output list. Averaging over the uniformly random hidden pair gives
\[
  \Pr[(a,x)\in L]
  =\frac1{N2^n}\sum_L\Tr(M_LA_L)
  \leq\frac{1 + (p-1)\varepsilon}{N},
\]
using $\sum_LM_L=I$.

\paragraph{(ii)~Adding the verification tag in the QROM.} 
Next, we consider proving the following statement in the QROM. Given a quantum token $U_a \ket{x}$ and a tag $z = H(a,x)$ with $(a,x)$ sampled uniformly at random, any adversary requires at least $\Omega(T)$ rounds of queries to retrieve $(a,x)$ with a constant probability. With~\Cref{eq:list_recovery_bound} in hand, such a result can be proven by a standard round-by-round reprogramming argument~\cite{BBBV97,Unr13,AHU19}. In particular, we show that each round of queries can increase the success probability of the adversary by at most $O(\sqrt{\delta_{p,\varepsilon,N}})$. As a result, achieving a constant success probability requires $\Omega(\sqrt{N}) = \Omega(T)$ rounds of queries when $p = \mathrm{poly}(\secp)$ and $\varepsilon = \mathrm{negl}(\secp)$.

In more detail, let $r$ denote the number of sequential oracle-query rounds and $p$ the maximum number of queries made in parallel in each round, where $p$ is polynomial in $\secp$. We will compare this experiment with one in which $z$ is \emph{independent} of $a,x$ and the oracle $H$. Since the two oracles need differ only at $(a,x)$, the key is to bound the probability that measuring one round's query register yields $(a,x)$.

First suppose that $H$ and $z$ are sampled independently of each other and of $(a,x)$. Similar to the idea of the one-way-to-hiding lemma~\cite{Unr13,AHU19}, measuring the $p$-parallel query registers produces a list of at most $p$ pairs. The adversary could produce this list from the token alone by simulating the independent oracle and tag. Thus, \Cref{eq:list_recovery_bound} implies
\[
  \Pr[(a,x) \text{ appears in the list}]
  \leq \delta_{p,\varepsilon,N}.
\]

To apply this bound to the real experiment, start with independently sampled $H$ and $z$, and let $\widetilde H$ agree with $H$ everywhere except that $\widetilde{H}(a,x)=z$. In other words, $\widetilde{H}$ is obtained by reprogramming $H$ to $z$ at $(a,x)$. Giving the adversary access to $\widetilde H$ produces the real experiment. We compare hybrids in which the first $j$ query rounds use $H$ and the remaining rounds use $\widetilde H$, for $j=0,\ldots,r$. The first hybrid is the real experiment, while in the last hybrid the oracle and tag are independent of the hidden pair. Adjacent hybrids differ on only one query round.

Immediately before that differing round, the adversary has interacted only with the independent oracle $H$. Measuring its $p$ query registers would therefore produce a list governed by the bound above. Hence the probability that at least one measured register equals $(a,x)$ is at most $\delta_{p,\varepsilon,N}$. By the quantum hybrid argument~\cite{BBBV97,AHU19}, the corresponding oracle-reprogramming bound is at most $2 \sqrt{\delta_{p,\varepsilon,N}}$, and summing over $r$ hybrid hops gives $2r\sqrt{\delta_{p,\varepsilon,N}}$.

In the last hybrid, both $H$ and $z$ are independent of the hidden pair. The adversary's oracle interaction can therefore be simulated using randomness independent of that pair. Applying \Cref{lem:hidden-pair-list} again, now with a list of size one, bounds its probability of recovering $(a,x)$ by $1/N$. Combining the two bounds gives
\[
  \Pr[\text{adversary outputs } (a,x)]
  \leq \beta_{r,p,\varepsilon,N}
  := \frac1N+2r\sqrt{\delta_{p,\varepsilon,N}}.
\]
For any polynomial query width $p$, our parameter choices give
\[
  \beta_{r,p,\varepsilon,N}\leq\frac1N+O(r/T)+\negl(\secp).
\]
Thus, when $r=o(T)$, the success probability is $o(1)$. See~\Cref{lem:hidden-pair-list-oracle} for details.

\paragraph{(iii)~Parallel repetition.}
The single-instance bound $\beta_{r,p,\varepsilon,N}$ is not yet negligible as $N = \mathrm{poly}(\secp)$. Indeed, an adversary can guess a basis uniformly and measure the token in that basis, recovering the pair with probability at least $1/N$.

We amplify security by taking $\ell=\secp$ independent instances. For each $i\in[\ell]$, independently sample $a_i,x_i$, prepare one token $U_{a_i}\ket{x_i}$, and publish $z_i=H_i(a_i,x_i)$, where the $H_i$ are independent random oracles. Each instance forms a fixed interactive quantum experiment supplying a token, oracle access for $r$ rounds, and a final pair-recovery test. Since we are working in an information-theoretic setting, Gutoski's parallel-repetition theorem~\cite[Theorem~4.9]{Gut10} bounds the probability of passing all independent tests by the product of their individual optimal success probabilities, even under arbitrary joint quantum operations. Consequently,
\[
  \Pr[\text{recover every }(a_i,x_i)]\leq\beta_{r,p,\varepsilon,N}^{\secp}.
\]
This uses the interactive repetition theorem; it does not follow merely from bounding each marginal success probability. Since $N\geq4$, for $r=o(T)$ we have $\beta_{r,p,\varepsilon,N}\leq1/4+o(1)$, and in particular $\beta_{r,p,\varepsilon,N}\leq1/2$ for sufficiently large $\secp$. Recovering the entire tuple therefore has negligible probability. See~\Cref{cor:hidden-pair-parallel-repetition} for details.

\paragraph{(iv) From one-wayness to indistinguishability.}
So far we have shown that an $r$-round adversary, given $U_{a_1}\ket{x_1},\ldots,U_{a_{\ell}}\ket{x_{\ell}},H_1(a_1,x_1),\ldots,H_\ell(a_{\ell},x_{\ell})$, cannot predict $(a_1,x_1,\ldots,a_{\ell},x_\ell)$ except with negligible probability. To convert such one-wayness to indistinguishability, we rely on the one-way-to-hiding lemma again. We now propose our QTLP construction that satisfies indistinguishability security for the solution bit. Let $F:\{0,1\}^*\to\{0,1\}$ be a random oracle independent of all the tag oracles $H_i$. To generate a puzzle for $\sol \in \{0,1\}$:
\begin{itemize}
  \item Sample independent pairs $(a_i,x_i)$ for $i\in[\ell]$.
  \item Prepare exactly one copy of each token $U_{a_i}\ket{x_i}$.
  \item Compute all tags $z_i = H_i(a_i,x_i)$ and the masked bit
  \[
    c=s\oplus F(a_1,x_1,\ldots,a_\ell,x_\ell).
  \]
  \item Output the tokens, the tags, and $c$.
\end{itemize}
The puzzle contains $\ell n = \secp n = O(\secp^2\log T)$ qubits and $\secp^2+1$ classical bits. Generation takes time $\mathrm{poly}(\secp,\log T)$, while honest solving takes time $\mathrm{poly}(\secp,T)$ and at most $T$ oracle rounds, including the final mask query.

An adversary that distinguishes the oracle-derived mask from an independent random bit must place sufficient query weight on the hidden tuple. Measuring an appropriate query would then recover that tuple, contradicting the amplified one-wayness bound. The parallel-query formulation follows the same oracle-hybrid argument; see also~\cite{AHU19}. Quantitatively, for a uniformly random solution bit,
\[
  \Pr[\text{adversary outputs } s]
  \leq\frac12+2r\sqrt{p\,\beta_{r,p,\varepsilon,N}^{\secp}}
  =\frac12+\negl(\secp)
\]
for polynomial $p$ and $r=o(T)$. See~\Cref{lem:security} for details.

\paragraph{Instantiating the set of unitaries.} All that remains is to identify the unitary family satisfying $\bigl|\bra x U_a^\dagger U_{a'}\ket{x'}\bigr|
  \leq\varepsilon$. These overlap conditions have an elementary BB84 implementation. Identify $a\in[N]$ with a $k$-bit string and encode it by repeating that string $\secp$ times:
\[
  C(a)=\underbrace{a\Vert\cdots\Vert a}_{\secp\text{ copies}}.
\]
This determines the token length $n=\secp k=O(\secp\log T)$. Define
\[
  U_a=\bigotimes_{j=1}^{n}\Had^{\,C(a)_j},
\]
where $\Had$ is the Hadamard gate. Distinct codewords differ in at least $\secp$ positions, so their cross-basis overlaps are at most $\varepsilon=2^{-\secp/2}$. States with different $x$-labels in the same basis are orthogonal. Thus the required condition holds for every pair of distinct hidden pairs, using only products of single-qubit BB84 states. Here $U_a$ abbreviates the codeword-indexed unitary $U_{C(a)}$ in~\Cref{def:code-unitaries}; code-selected BB84 bases also appear in~\cite[Section~4.1]{FS17}.

\paragraph{Related work.} 
Mahmoody, Moran, and Vadhan~\cite{MMV11} prove an impossibility result for TLPs in the random oracle model. Afshar et al.~\cite{ACHLM23} extend this study to the QROM and prove impossibility when either the generator or the solver is classical. Hidden-basis encodings appear in work of Damg\aa rd, Pedersen, and Salvail on quantum key uncertainty~\cite{DPS04}. The idea of combining binary linear codes with BB84 states appeared in the work of Fehr and Salvail for quantum authentication and key recycling~\cite[Section~4.1]{FS17}.

\section{Preliminaries}
\label{sec:prelim}

\paragraph{Notations.}
Throughout the work, we write $\secp$ to denote the security parameter. We define $[m]$ to be $\{1,\ldots,m\}$. We assume that the reader is familiar with the basics of quantum information.

\subsection{Quantum Random Oracle Model}

In the quantum random oracle model (QROM), by \emph{query round} we mean a round of parallel oracle queries. Specifically, let $H$  be a random oracle and $p \in \N$. The query operator is defined as $O_H \ket{x} \ket{y} := \ket{x} \ket{y \oplus H(x)}$. One round of $p$-parallel queries to $H$ refers to applying the operator $O_H^{\otimes p}$. We write $r$ for the number of query rounds and $p$ for the query \emph{width}, that is, the number of parallel queries in each round. When multiple oracles are available, superposition queries across different oracles are not allowed.

\subsection{Quantum Time-Lock Puzzles in the QROM}

We use the syntax of time-lock puzzles in~\cite{RSW96}, with the relaxation that puzzles are allowed to be quantum states. Throughout, delay parameters $T=T(\secp)$ are polynomially bounded integers satisfying $T=\omega(1)$.

\begin{definition}[Quantum Time-Lock Puzzles]
A quantum time-lock puzzle (QTLP) in the quantum random oracle model is a pair of quantum oracle algorithms $(\PGen^{(\cdot)}, \PSolve^{(\cdot)})$.
    \begin{itemize}
        \item $\PGen^H(1^\secp,T(\secp),\sol)\to\ket{\puz}$: The generation algorithm takes as input a security parameter $1^\secp$, a delay parameter $T(\cdot)$, and a solution bit $\sol\in\{0,1\}$, and outputs a quantum puzzle $\ket{\puz}$.
        \item $\PSolve^H(1^\secp,T(\secp),\ket{\puz})\to\sol'$: The solver takes as input a security parameter $1^\secp$, a delay parameter $T(\cdot)$, and a quantum puzzle $\ket{\puz}$, and outputs $\sol'\in\{0,1,\bot\}$, where $\bot$ denotes an abort.
    \end{itemize}
\end{definition}

\begin{definition}[Correctness]
\label{def:qtlp-correctness}
A QTLP is \emph{correct} if for all polynomials $T(\cdot)$ and $\sol\in\{0,1\}$, there exists a negligible function $\negl$ such that for all $\secp$,
    \[
    \Pr\left[
    \sol' = \sol : \begin{aligned}
        &\ket{\puz} \gets \PGen^{H}(1^\secp,T(\secp),\sol), \\
        &\sol' \gets \PSolve^{H}(1^\secp,T(\secp),\ket{\puz})
        \end{aligned}
    \right] \geq 1-\negl(\secp),
    \]
where the probability is over the random oracle $H$ and all internal randomness and measurements.
\end{definition}

\begin{definition}[Security]
\label{def:qtlp-security}
A QTLP is \emph{secure} if, for all polynomials $T(\cdot)$, $p(\cdot)$, any function $r=o(T)$, and any (possibly computationally unbounded) adversary $\adversary$ making at most $r$ rounds of $p$-parallel queries, there exists a negligible function $\negl$ such that for all $\secp$,
    \[
    \Pr_H\left[\sol' = \sol: \substack{
      \sol \gets \{0,1\}, \\
      \ket{\puz} \gets \PGen^{H}(1^\secp,T(\secp),\sol), \\
      \sol' \gets \adversary^{H}(1^\secp,T(\secp),\ket{\puz})
    }\right]
    \le \frac{1}{2} + \negl(\secp).
    \]
\end{definition}

\begin{definition}[Efficiency]
\label{def:qtlp-efficiency}
A QTLP is \emph{efficient} if there exist polynomials $t_{gen}(\cdot)$ and $t_{sol}(\cdot)$ such that for all $\secp,T(\cdot)$ and every $\sol\in\{0,1\}$, the following bounds hold:
    \begin{enumerate}
        \item $\PGen^H(1^\secp,T,\sol)$ runs in time at most $t_{gen}(\secp,\log T)$.
        \item On every generated puzzle $\ket{\puz} \gets \PGen^{H}(1^\secp,T,\sol)$, $\PSolve^H(1^\secp,T,\ket{\puz})$ runs in time at most $t_{sol}(\secp,T)$ and uses at most $T$ rounds of queries.
    \end{enumerate}
\end{definition}

\subsection{BB84 Encodings from Binary Codes}

\begin{definition}[Binary linear codes]
An $[n,k,d]$-code $\mathcal{C}$ is a $k$-dimensional subspace of $\mathbb{F}^n_2$ such that 
\[
\min_{x,y \in \mathcal{C}: x \neq y} |x-y| \geq d.
\]
\end{definition}

\begin{definition}
\label{def:code-unitaries}
Let $\mathcal{C}$ be an $[n,k,d]$-code. For every codeword $x \in \mathcal{C}$, define the unitary $U_x := \bigotimes_{i=1}^n \Had^{x_i}$ where $x_i$ denotes the $i$-th bit of $x$ and $\Had$ is the Hadamard gate.
\end{definition}

If distinct codewords $c,c'\in\mathcal{C}$ differ in $t\geq d$ positions, then
\begin{equation}
  \left|\bra{x}U_c^\dagger U_{c'}\ket{x'}\right| \leq 2^{-t/2} \leq 2^{-d/2}.
  \label{eq:code-bb84-overlap}
\end{equation}
Each differing coordinate contributes a factor of magnitude $2^{-1/2}$, while an agreeing coordinate contributes either $0$ or $1$, which explains the first inequality. For the same codeword $c$, distinct computational-basis labels give orthogonal states. Thus, looking ahead, the family satisfies the overlap condition of \Cref{lem:hidden-pair-list} with $\varepsilon=2^{-d/2}$.

\subsection{Exact Grover Search}

We recall the exact Grover search of Brassard et al.~\cite[Section~2.1]{BHMT02}. Suppose exactly one of $N$ elements is marked, and let $\phi=\arcsin(1/\sqrt N)$. Ordinary Grover search succeeds in finding the marked element with probability $\sin^2((2q+1)\phi)$ after $q$ iterations. If $\tilde g=\pi/(4\phi)-1/2$ is an integer, this probability is one at $q=\tilde g$ since $(2\tilde{g}+1) \phi = \pi/2$. Otherwise, we round up to $g=\lceil\tilde g\rceil$ and reduce the initial marked amplitude to $\sin\theta$, where $\theta := \pi/(4g+2)$, so that $(2g+1)\theta = \pi/2$.  Alternating the reflection about the marked element $R_s=I-2\proj s$ with the reflection about the initial state $g$ times recovers the marked element exactly.

\begin{lemma}[\cite{BHMT02}]
\label{lem:exact-amplification}
Suppose exactly one element $s$ in $[N]$ is marked. Given oracle access to $R_s:=I-2\proj s$ on the space with orthonormal basis $\{\ket\bot,\ket1,\ldots,\ket N\}$, so that $R_s\ket\bot=\ket\bot$, there exists an algorithm that finds $s$ with probability $1$ by making $O(\sqrt N)$ queries to $R_s$.
\end{lemma}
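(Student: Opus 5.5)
The plan is to work entirely inside the two-dimensional subspace spanned by the marked vector $\ket s$ and its complement. Set $g=\lceil\tilde g\rceil$ with $\tilde g=\pi/(4\phi)-1/2$ and $\phi=\arcsin(1/\sqrt N)$, and let $\theta=\pi/(4g+2)$. Since $g\ge\tilde g$, we have $\theta\le\phi$, so $N\sin^2\theta\le 1$. This lets us define the initial state
\[
\ket{\mathsf{Init}}=\sqrt{1-N\sin^2\theta}\,\ket\bot+\sin\theta\sum_{a\in[N]}\ket a .
\]
Preparing $\ket{\mathsf{Init}}$ needs no oracle queries, so a fixed unitary $V$ with $V\ket\bot=\ket{\mathsf{Init}}$ exists. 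Decompose $\ket{\mathsf{Init}}=\sin\theta\ket s+\cos\theta\ket{s^\perp}$, where $\ket{s^\perp}$ is the normalized projection of $\ket{\mathsf{Init}}$ onto the orthogonal complement of $\ket s$. The coefficient on $\ket s$ is exactly $\sin\theta$, and $\ket{s^\perp}$ is a unit vector orthogonal to $\ket s$ by construction.

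Next, define the Grover iterate $G=R_{\mathsf{Init}}R_s$, where $R_{\mathsf{Init}}=2\proj{\mathsf{Init}}-I=V(2\proj\bot-I)V^\dagger$ is implementable without queries. Both reflections preserve $\mathrm{span}\{\ket s,\ket{s^\perp}\}$. On this plane, $R_s$ is the reflection across $\ket{s^\perp}$ and $R_{\mathsf{Init}}$ is the reflection across $\ket{\mathsf{Init}}$. Their composition is therefore a rotation by $2\theta$, the standard fact that two reflections compose to a rotation by twice the angle between their axes. By induction,
\[
G^q\ket{\mathsf{Init}}=\sin((2q+1)\theta)\ket s+\cos((2q+1)\theta)\ket{s^\perp}.
\]

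Taking $q=g$ gives $(2g+1)\theta=\pi/2$, so $G^g\ket{\mathsf{Init}}=\ket s$ exactly. Measuring in the basis $\{\ket\bot,\ket1,\dots,\ket N\}$ then returns $s$ with probability $1$. The query count is $g\le \tilde g+1\le \pi/(4\phi)+1/2$. Since $\phi\ge 1/\sqrt N$, this is $O(\sqrt N)$.

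I expect no real obstacle; the only points needing care are that $\theta\le\phi$, so that the $\ket\bot$ amplitude is real, and the sign conventions in the rotation. If $N\sin^2\theta=1$, as happens when $\tilde g$ is an integer, the $\ket\bot$ component simply vanishes and the argument is unchanged.
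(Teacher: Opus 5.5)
Your proposal is correct and follows the paper's proof essentially step for step: the same $g$ and $\theta$, the same biased initial state, and the same analysis in the two-dimensional invariant plane. The differences are minor. You appeal to the fact that two reflections compose to a rotation where the paper writes out the $2\times 2$ matrices, and you verify the $O(\sqrt N)$ query count via $\phi\geq 1/\sqrt N$, which the paper leaves implicit. For $N=1$ you get $\theta=\pi/2$, so your $\ket{s^\perp}$ (the paper's $\ket{\mathsf{Bad}}$) is undefined, which is why the paper handles $N=1$ separately; in your argument this is harmless because $g=0$ and $\ket{\mathsf{Init}}=\ket s$ already.
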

For completeness, we provide a proof below.
\begin{proof}
If $N=1$, output the only element without making any queries. Otherwise, let $\phi:=\arcsin(1/\sqrt N)$, $g:=\lceil\pi/(4\phi)-1/2\rceil$, and $\theta:=\pi/(4g+2)$. Since $\theta\leq\phi$, the following state is normalized:
\begin{equation*}
  \ket{\mathsf{Init}}
  :=\sqrt{1-N\sin^2\theta}\,\ket\bot
    +\sin\theta\sum_{x\in[N]}\ket x.
\end{equation*}
Here $\ket\bot$ is orthogonal to every $\ket{x}$. Set $R_{\mathsf{Init}}:=2\proj{\mathsf{Init}}-I$ and write
\[
  \ket{\mathsf{Init}}=\sin\theta\ket s+\cos\theta\ket{\mathsf{Bad}},
  \qquad
  \ket{\mathsf{Bad}}:=\frac{\ket{\mathsf{Init}}-\sin\theta\ket s}{\cos\theta}.
\]
The states $\ket s$ and $\ket{\mathsf{Bad}}$ form an orthonormal basis $\mathcal B=(\ket s,\ket{\mathsf{Bad}})$ of the two-dimensional subspace $\mathcal K:=\operatorname{span}\{\ket s,\ket{\mathsf{Bad}}\}$. Both reflections ($R_s$ and $R_{\mathsf{Init}}$) preserve $\mathcal K$, since $\ket s$ and $\ket{\mathsf{Init}}$ belong to it. We now compute the matrices of these reflections on $\mathcal K$ with respect to the basis $\mathcal B$. Since $R_s$ maps $\ket s$ to $-\ket s$ and leaves $\ket{\mathsf{Bad}}$ unchanged, its matrix is
\[
  \begin{pmatrix}
    -1&0\\
    0&1
  \end{pmatrix}.
\]
The coordinate vector of $\ket{\mathsf{Init}}$ in this basis is $\binom{\sin\theta}{\cos\theta}$. Therefore, the matrix of $R_{\mathsf{Init}}=2\proj{\mathsf{Init}}-I$ is
\[
  \begin{aligned}
    2\begin{pmatrix}\sin\theta\\\cos\theta\end{pmatrix}
      \begin{pmatrix}\sin\theta&\cos\theta\end{pmatrix}-I_2
    &=\begin{pmatrix}
      2\sin^2\theta-1&2\sin\theta\cos\theta\\
      2\sin\theta\cos\theta&2\cos^2\theta-1
    \end{pmatrix}\\[2pt]
    &=\begin{pmatrix}
      -\cos(2\theta)&\sin(2\theta)\\
      \sin(2\theta)&\cos(2\theta)
    \end{pmatrix}.
  \end{aligned}
\]
Here we used $\sin(2\theta)=2\sin\theta\cos\theta$ and $\cos(2\theta)=\cos^2\theta-\sin^2\theta$. Since $R_s$ acts first in the product $R_{\mathsf{Init}}R_s$, the restriction of this product to $\mathcal K$ is represented, in the same basis, by the matrix product
\[  \begin{pmatrix}    -\cos(2\theta)&\sin(2\theta)\\
    \sin(2\theta)&\cos(2\theta)
  \end{pmatrix}
  \begin{pmatrix}
    -1&0\\
    0&1
  \end{pmatrix}
  =\begin{pmatrix}
    \cos(2\theta)&\sin(2\theta)\\
    -\sin(2\theta)&\cos(2\theta)
  \end{pmatrix}.
\]
The full operator acts as $-I$ on $\mathcal K^\perp$. Since $\ket{\mathsf{Init}}\in\mathcal K$, every iterate remains in $\mathcal K$, so its evolution is completely described by this $2\times2$ matrix.
Thus each iteration increases the angle from the bad direction by $2\theta$, and for every integer $q\geq0$,
\begin{equation}
  (R_{\mathsf{Init}}R_s)^q\ket{\mathsf{Init}}
  =\sin((2q+1)\theta)\ket s
    +\cos((2q+1)\theta)\ket{\mathsf{Bad}}.
  \label{eq:exact-grover-trajectory}
\end{equation}
At $q=g$, the right-hand side is $\ket s$ because $(2g+1)\theta=\pi/2$.
\end{proof}

\section{Construction}
\label{sec:construction}

We give our QTLP construction in the QROM. For the parameters chosen below, let $H_{i,\secp}:\{0,1\}^*\to\{0,1\}^{\secp}$ for $i\in[\secp]$, and $F:\{0,1\}^{*}\to\{0,1\}$  be independent random oracles.
\begin{construction} \label{con:lock} \hfill
\begin{itemize}
\item  $\PGen^H(1^\secp,T,\sol):$
    \begin{enumerate}
        \item Set $k=k(T):=2\lfloor\log_2(T+1)\rfloor-2$, $n:=\secp k$, and $d:=\secp$. Let $\mathcal{C}_{\secp,k}$ be an $[n,k,d]$-code (with an efficient encoding algorithm).\footnote{Such codes can be constructed easily using the binary linear code $\mathcal{C}_{\secp,k}$ with repetition encoder
        \[
          \mathcal{C}_{\secp,k}(a):=\underbrace{a\Vert a\Vert\cdots\Vert a}_{\secp\text{ copies}},
          \qquad a\in\{0,1\}^k.
        \]
        Since the Hamming distance between encoded messages is $\secp$ times their original distance, this is an $[n,k,d]=[\secp k,k,\secp]$-code.}
        \item For all $i \in[\secp]$, independently sample $a_i \gets \{0,1\}^{k}$ and $x_i \gets \{0,1\}^{n}$. 
        \item Prepare exactly one copy of each token
        \[ \ket{\psi_i} := U_{\mathcal{C}_{\secp,k}({a_i})} \ket{x_i}, \]
        where $\mathcal{C}_{\secp,k}(a_i)$ is the encoding of $a_i$ defined above, and the unitaries $U_x$ are defined in \Cref{def:code-unitaries}.
        \item Compute all tags $z_i$ and the mask $\msk$ using one round of $(\secp+1)$-parallel queries:
        \begin{equation*}
        z_i := H_{i,\secp}(a_i, x_i) \quad i \in[\secp],
        \qquad \msk := F(a_1,x_1,\ldots,a_\secp,x_\secp).
        \end{equation*}
        \item Set $c := \sol \oplus \msk$ and output
        \begin{equation*}
        \ket{\puz}
        := \bigotimes_{i \in [\secp]} \ket{z_i}_{\reg{Z}_i} \ket{\psi_i}_{\reg{B}_i} \otimes \ket{c}_{\reg{C}}.
        \end{equation*}
        Here $\reg{Z}_i$ holds the $\secp$-bit tag, $\reg{B}_i$ holds the $n$-qubit token, and $\reg{C}$ holds the masked bit.
    \end{enumerate}
\Needspace{6\baselineskip}
\item $\PSolve^H(1^\secp,T,\ket{\puz}):$
    \begin{enumerate}
        \item Read the classical tag registers $\reg{Z}_1,\ldots,\reg{Z}_\secp$ and the masked-bit register $\reg{C}$ to obtain $(z_1,\ldots,z_\secp,c)$, retaining the token registers $\reg{B}_1,\ldots,\reg{B}_\secp$. 
        \item Set $g := \lceil\frac{T}{2}\rceil - 1$ and $\theta := \frac{\pi}{4g + 2}$. For each $i  \in  [\secp]$, allocate a fresh search register $\reg{A}_i$, with basis labels $\{\bot\}\cup\{0,1\}^k$, and initialize it with the following state\footnote{For the specific choice of parameters, $2^k \sin^2(\theta) \leq 1$, which makes the state $\ket{\mathsf{Init}}$ well defined.}:
        \begin{equation}
            \ket{\mathsf{Init}} := \sqrt{1 - 2^{k} \sin^2 \theta}\, \ket{\bot} + \sin \theta \sum_{a \in \{0,1\}^{k}} \ket{a},
        \label{eq:biased-state}
        \end{equation}
        where $\bot$ is a dummy label and $\ket{\bot}$ is orthogonal to all $\ket a$. Let $R_{\mathsf{Init}}:=2\proj{\mathsf{Init}}-I$ act on this search register.
        \item For every $i\in[\secp]$ and candidate $a\in\{0,1\}^k$, define the projector on the token register $\reg{B}_i$ onto labels matching the tag by
        \[
          \Pi_{i,a,z_i}:=\sum_{x\in\{0,1\}^n:\,H_i(a,x)=z_i}\proj x.
        \]
        Define the reflection on $\reg{A}_i\reg{B}_i$ as
        \begin{equation}
          R_{i,z_i}:=\proj\bot_{\reg{A}_i}\otimes I_{\reg{B}_i}+
          \sum_{a\in\{0,1\}^k}\proj a_{\reg{A}_i}\otimes
          U_{\mathcal C_{\secp,k}(a)}(I_{\reg{B}_i}-2\Pi_{i,a,z_i}) U_{\mathcal C_{\secp,k}(a)}^\dagger.
          \label{eq:marking-operator}
        \end{equation}
        All $U$ and $\Pi$ operators in this expression act on $\reg{B}_i$. To implement it, use $\reg{A}_i$ as the control and apply the candidate-controlled $U_{\mathcal C_{\secp,k}(a)}^\dagger$, compute $H_i(a,x)$ coherently, flip the phase when the search label is not $\bot$ and the hash value equals $z_i$, uncompute the oracle answer, and apply the candidate-controlled $U_{\mathcal C_{\secp,k}(a)}$. This uses two rounds of oracle queries and leaves the dummy branch unchanged.
        \item For all $i\in[\secp]$, perform the following $g$ times in parallel: apply $R_{i,z_i}$ to $\reg{A}_i\reg{B}_i$, followed by $R_{\mathsf{Init}}$ to $\reg{A}_i$.
        \item For each $i$, measure $\reg{A}_i$ in the computational basis to obtain $a'_i$; abort if any outcome is $\bot$. Apply $U_{\mathcal C_{\secp,k}(a'_i)}^\dagger$ to $\reg{B}_i$, then measure $\reg{B}_i$ in the computational basis to obtain $x'_i$.
        \item Query $F(a'_1,x'_1,\ldots,a'_\secp,x'_\secp)$ to obtain $\mathsf{msk}'$. Output
        \[
        \sol' := c \oplus \mathsf{msk}'.
        \]
    \end{enumerate}
\end{itemize}
\end{construction}
By~\Cref{eq:code-bb84-overlap}, this choice gives $\varepsilon=2^{-\secp/2}$, so $p\varepsilon$ is negligible for every polynomial query width $p=p(\secp)$. Each token contains $n=\secp k$ qubits, and the $\secp$ tokens contain $\secp n=O(\secp^2\log T)$ qubits in total.

\begin{lemma}[Efficiency]
\label{lem:efficiency}
\Cref{con:lock} is efficient.
\end{lemma}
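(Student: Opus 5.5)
The plan is to check the two clauses of \Cref{def:qtlp-efficiency} separately by direct inspection of \Cref{con:lock}, tracking parameter sizes. First, $k=2\lfloor\log_2(T+1)\rfloor-2=O(\log T)$, $n=\secp k=O(\secp\log T)$, and the repetition code of the footnote is encoded in time $O(n)$. For $\PGen$, sampling the $a_i,x_i$ takes $\secp(k+n)$ random bits. Preparing each token $U_{\mathcal C_{\secp,k}(a_i)}\ket{x_i}$ needs $n$ single-qubit gates: $X^{(x_i)_j}$ followed by $\Had^{\mathcal C(a_i)_j}$. The tags and mask are computed in one round of $\secp+1$ classical queries on inputs of length $O(\secp n)$. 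So the total running time is $\mathrm{poly}(\secp,\log T)$, which gives $t_{gen}$.

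For $\PSolve$, I will count oracle rounds first, since this is the only delicate point. Each application of $R_{i,z_i}$ uses two rounds: one to compute $H_i(a,x)$ into an ancilla and one to uncompute it. Applying $R_{\mathsf{Init}}$ uses no queries. The $\secp$ instances run in parallel, each contributing one query per round, so the width is $\secp$. The $g$ iterations therefore use $2g$ rounds, and the final query to $F$ adds one more, for $2g+1$ rounds in total. With $g=\lceil T/2\rceil-1$ we get $2g+1\le 2(T+1)/2-2+1=T$. I will check both parities: for even $T$, $2g+1=T-1$, and for odd $T$, $2g+1=T$. Strictly speaking, the footnoted well-definedness $2^k\sin^2\theta\le1$ belongs to correctness, but I would verify it here so that $\ket{\mathsf{Init}}$ is preparable. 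Since $2^k\le (T+1)^2/4$ and $\theta=\pi/(4g+2)$ with $4g+2\ge 2T-2$, we need roughly $(T+1)^2\sin^2(\pi/(2T-2))\le 4$. This holds for large $T$ because $\sin u\le u$ and $\pi^2/4<4$ up to lower-order terms. Small $T$ is handled by $T=\omega(1)$, or by direct check. This inequality is the main obstacle, and I would make the constant bookkeeping explicit there.

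For running time, each search register has dimension $2^k+1=\mathrm{poly}(T)$. Hence $\ket{\mathsf{Init}}$ and $R_{\mathsf{Init}}$ can be implemented by generic state-preparation circuits of size $\mathrm{poly}(T)$. The reflection $R_{\mathsf{Init}}$ is obtained by conjugating a reflection about $\ket{0}$ with that state preparation. The controlled $U_{\mathcal C(a)}$ is implemented by $n$ Hadamards, each controlled on the corresponding bit of $a$ (position $j$ is controlled by bit $((j-1)\bmod k)+1$), plus control on the non-$\bot$ flag, giving $O(n)$ gates. The phase flip compares a $\secp$-bit hash with $z_i$. The final measurements, the $U^\dagger$ application, and the XOR with $c$ are all linear in $n$. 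Multiplying by $\secp$ instances and $O(T)$ iterations gives total time $\mathrm{poly}(\secp,T)$, which gives $t_{sol}$ and completes the proof.
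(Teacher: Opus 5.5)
Your proposal is correct and follows essentially the same route as the paper: you bound the generator by its $\secp^2k$ BB84 qubits and one round of $\secp+1$ classical queries. You bound the solver by two query rounds per parallel Grover iteration (width $\secp$) plus one final query to $F$, which gives $2g+1\le T$ rounds and $\mathrm{poly}(\secp,T)$ time.

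The differences are minor. The paper prepares $\ket{\mathsf{Init}}$ with an explicit flag-qubit circuit (one rotation and $k$ controlled Hadamards, so $\mathrm{poly}(k)$ gates) rather than a generic $\mathrm{poly}(T)$ state-preparation circuit. It leaves the normalization $2^k\sin^2\theta\le 1$ to a footnote in the construction. It also adds a remark that compiling into a finite gate set costs only polynomial overhead and $2^{-\secp}$ extra error, with no change in oracle depth.
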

\begin{proof}
We count each one- or two-qubit gate and each oracle call as one operation, allowing single-qubit rotations with efficiently computable angles.\footnote{Over a fixed finite universal gate set, approximate each of the $M=\mathrm{poly}(\secp,T)$ non-oracle gates to operator-norm error at most $2^{-\secp}/(4M)$. This gives polynomial overhead and adds at most $2^{-\secp}$ to the failure probability, without changing oracle depth.} The generator prepares $\secp^2k$ BB84 qubits and makes $\secp+1$ classical oracle calls in one round, so its running time, including oracle calls, is $\mathrm{poly}(\secp,\log T)$.

Using a flag qubit to distinguish $\bot$ from the candidate labels, the solver prepares each $\ket{\mathsf{Init}}$ with one single-qubit rotation and $k$ controlled Hadamard gates. Reflection about this state uses the preparation circuit, its inverse, and a reflection about the all-zero state, requiring $\mathrm{poly}(k)$ gates. Each search iteration uses $\mathrm{poly}(\secp,\log T)$ non-oracle gates and $2\secp$ oracle calls across all $\secp$ copies, in two parallel rounds. Thus, $g=\lceil T/2\rceil-1$ iterations and the final call to $F$ use at most $2g\secp+1=O(\secp T)$ oracle calls in $2g+1\leq T$ rounds, with at most $\secp$ calls per round. The total running time, including oracle calls, is $\mathrm{poly}(\secp,T)$.
\end{proof}

\section{Proof of Correctness}

\begin{lemma}[Correctness] \label{lem:correctness}
\Cref{con:lock} is correct.
\end{lemma}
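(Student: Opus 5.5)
We compare the real solver with an ideal one and bound the distance between their final states. The ideal solver replaces each marking operator $R_{i,z_i}$ by the exact Grover reflection $R^{\mathrm{ideal}}_i := (I-2\proj{a_i})_{\reg A_i}\otimes I_{\reg B_i}$, and acts on the correct token $\ket{\psi_i}$. On the subspace $\mathcal S_i:=\mathrm{span}\{\ket{b}\}_{b\in\{\bot\}\cup\{0,1\}^k}\otimes\ket{\psi_i}$, this is exactly the setting of \Cref{lem:exact-amplification} with $N=2^k$ and marked element $a_i$.

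First we check the parameters. We need $g=\lceil T/2\rceil-1$ to satisfy $g\geq\lceil\pi/(4\phi)-1/2\rceil$ with $\phi=\arcsin(2^{-k/2})$. This is equivalent to $2^k\sin^2\theta\leq1$ for $\theta=\pi/(4g+2)$, i.e., the footnote's well-definedness claim, which follows from $2^k\leq (T+1)^2/4$ and $\sin\theta\geq 2\theta/\pi$. By \eqref{eq:exact-grover-trajectory}, after $g$ iterations the ideal state is exactly $\ket{a_i}\ket{\psi_i}$, since $(2g+1)\theta=\pi/2$ holds for this $\theta$ by definition. The trajectory only needs $(2g+1)\theta=\pi/2$ and $\ket{\mathsf{Init}}$ to be normalized, not the minimal $g$. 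Measuring $\reg A_i$ then gives $a_i$, and measuring $U^\dagger_{\mathcal C(a_i)}\ket{\psi_i}=\ket{x_i}$ gives $x_i$. So the ideal solver recovers every pair, queries $F$ correctly, and outputs $\sol$.

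Next we bound the per-step error. For any state $\ket{\Phi}\in\mathcal S_i$ (possibly entangled with other registers on which the operators act trivially), the identity $R_{i,z_i}\ket{\Phi}-R^{\mathrm{ideal}}_i\ket{\Phi} = -2\sum_{a\neq a_i}\proj a\otimes U_{\mathcal C(a)}\Pi_{i,a,z_i}U^\dagger_{\mathcal C(a)}\ket{\Phi}$ holds exactly. The $a_i$ branch matches because $\Pi_{i,a_i,z_i}$ fixes $\ket{x_i}$ and $H_i(a_i,x_i)=z_i$. The overview computation then gives $\E_{H_i}\|\cdot\|^2\leq 4\cdot 2^{-\secp}$. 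The key point is that for $a\neq a_i$ the row $H_i(a,\cdot)$ is independent of $z_i$, $(a_i,x_i)$, and the state $\ket{\Phi}$.

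The main obstacle is that last independence claim. Before step $t$, the real state has already interacted with $H_i(a,\cdot)$ for $a\neq a_i$. I would avoid it by running the hybrid entirely on ideal states. Let $\ket{\Phi_t}$ be the ideal state after $t$ iterations. It lies in $\mathcal S_i$ and depends only on $(a_i,x_i)$, not on $H_i$ off the point $(a_i,x_i)$. The standard telescoping bound $\|W_g\cdots W_1\ket{\Phi_0}-V_g\cdots V_1\ket{\Phi_0}\|\leq\sum_t\|(W_t-V_t)\ket{\Phi_{t-1}}\|$ uses only unitarity of the real $W_t=R_{\mathsf{Init}}R_{i,z_i}$, with $V_t$ the ideal step. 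The ideal states are independent of the off-target rows of $H_i$, so each term has $\E_{H_i}\|\cdot\|\leq 2\cdot2^{-\secp/2}$ by Jensen. Summing over $g\leq T$ steps and the $\secp$ instances (which act on disjoint registers and independent oracles) gives an expected final-state distance of at most $2\secp T 2^{-\secp/2}$.

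Finally, the trace-distance bound $\|\ket u-\ket v\|$ on measurement statistics shows the real solver outputs $\sol$ except with probability $O(\secp T2^{-\secp/2})$, plus the $2^{-\secp}$ gate-approximation error from the efficiency footnote. This is negligible since $T$ is polynomial.
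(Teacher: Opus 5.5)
Your proposal is correct and follows essentially the paper's argument. You compare with an ideal solver using $R_i=I-2\proj{a_i}$, telescope with the error evaluated on the \emph{ideal} intermediate states so that the rows $H_i(a,\cdot)$, $a\neq a_i$, are independent of the tested state, and average over $H_i$. The paper is only sharper: it tracks the wrong-label amplitude $\cos((2q+1)\theta)\tan\theta$ so the accumulated error is $\sqrt{W_i}$ regardless of $g$, and uses that the failure projector annihilates the ideal final state to get failure probability at most $W_i$. This gives a total of $\secp(2^k-1)2^{-\secp}$, versus your still-negligible $O(\secp T2^{-\secp/2})$.

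One small slip: verifying $2^k\sin^2\theta\le 1$ needs the upper bound $\sin\theta\le\theta$, not the lower bound $\sin\theta\ge 2\theta/\pi$. Even with $\sin\theta\le\theta$, the chain through $2^k\le(T+1)^2/4$ fails for a few small $T$, such as $T=3$, where $2^k\sin^2\theta=1$ exactly; these cases must be checked directly.
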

\begin{proof}
Fix $\secp$, $T$, and $\sol$.\\

\noindent \textbf{Registers.}
We use the register names from \Cref{con:lock}. For each $i\in[\secp]$, the register $\reg{Z}_i$ contains the classical tag $z_i$, and $\reg{B}_i$ contains the token $\ket{\psi_i}=U_{\mathcal C_{\secp,k}(a_i)}\ket{x_i}$ corresponding to the hidden pair $(a_i,x_i)$ chosen by the generator. The register $\reg{C}$ contains the masked bit $c$, and $\reg{A}_i$ is the solver's fresh search register, with basis labels $\{\bot\}\cup\{0,1\}^k$. After reading the tags and $c$, we suppress $\reg{Z}_i$ and $\reg{C}$ in the analysis. All search states $\ket a$, $\ket\bot$, $\ket{\mathsf{Init}}$, and $\ket{\mathsf{Bad}_i}$ below belong to $\reg{A}_i$; all token states $\ket x$ and $\ket{\psi_i}$, and all $U$ and $\Pi$ operators, belong to $\reg{B}_i$. Each tensor product is ordered as $\reg{A}_i\reg{B}_i$.\\

\noindent \textbf{The ideal solver.}
For the analysis, define an ideal version $\widetilde{\PSolve}$ of the solver. In addition to the puzzle, this auxiliary algorithm has oracle access, for each $i\in[\secp]$, to the reflection
\[
  R_i:=I_{\reg{A}_i}-2\proj{a_i}_{\reg{A}_i}.
\]
This reflection changes the sign of $\ket{a_i}$ and leaves every other basis state of $\reg{A}_i$, including $\ket\bot$, unchanged. On the joint registers it acts as $R_i\otimes I_{\reg{B}_i}$, so it does not affect the token. Access to these reflections, which depend on the hidden labels $a_i$, is used only for comparison in the proof. The construction implements the reflections $R_{i,z_i}$ instead. Throughout the following analysis, both solvers use exact state preparations and gates; the error from finite-gate approximation is included at the end.

The ideal solver proceeds as follows:
\begin{enumerate}
  \item Read the tags $z_1,\ldots,z_\secp$ and the masked bit $c$, retaining the token registers. Use the same parameters $g=\lceil T/2\rceil-1$ and $\theta=\pi/(4g+2)$ as $\PSolve$, and initialize each $\reg{A}_i$ in the state $\ket{\mathsf{Init}}$ from \Cref{eq:biased-state}. The joint state of $\reg{A}_i\reg{B}_i$ is then
  \[
    \ket{\Phi_i}:=\ket{\mathsf{Init}}_{\reg{A}_i}\otimes\ket{\psi_i}_{\reg{B}_i}.
  \]
  \item For all $i\in[\secp]$, perform $g$ search iterations in parallel. In each iteration, apply $R_i$ to $\reg{A}_i$, followed by $R_{\mathsf{Init}}$ to the same register. Thus one ideal iteration on $\reg{A}_i\reg{B}_i$ is
  \[
    \widetilde{G}_i:=(R_{\mathsf{Init}}R_i)\otimes I_{\reg{B}_i}.
  \]
  In particular, the token register remains in $\ket{\psi_i}$ throughout the ideal search.
  \item Measure each $\reg{A}_i$ in the computational basis to obtain $a'_i$, aborting if any outcome is $\bot$. For each $i$, apply $U_{\mathcal C_{\secp,k}(a'_i)}^\dagger$ to $\reg{B}_i$ and measure it in the computational basis to obtain $x'_i$.
  \item Query $F(a'_1,x'_1,\ldots,a'_\secp,x'_\secp)$ to obtain $\mathsf{msk}'$, and output $\sol'=c\oplus\mathsf{msk}'$.
\end{enumerate}
Thus the ideal solver uses the same initialization and final decoding and output procedure as $\PSolve$; its search uses $R_i\otimes I_{\reg{B}_i}$ in place of $R_{i,z_i}$.\\

\noindent \textbf{Correctness of the ideal solver.}
If $g=0$, then $k=0$ and $n=0$, so there is only one possible pair $(a_i,x_i)$ for each $i$. Both solvers recover the correct mask with their final query to $F$, and correctness is perfect in this case. Henceforth, assume $g\geq1$.
By \Cref{eq:exact-grover-trajectory}, the ideal search satisfies
\[
  \widetilde{G}_i^g\ket{\Phi_i}
  =\ket{a_i}_{\reg{A}_i}\otimes\ket{\psi_i}_{\reg{B}_i}.
\]
Indeed, the search register has the initialization used in that equation with marked label $a_i$, and $(2g+1)\theta=\pi/2$. Measuring $\reg{A}_i$ therefore returns $a_i$ with certainty. The subsequent decoding maps $\ket{\psi_i}$ to $\ket{x_i}$, so measuring $\reg{B}_i$ returns $x_i$ with certainty as well. The ideal solver consequently queries $F$ on the generator's tuple, recovers $\msk$, and outputs $c\oplus\msk=\sol$.\\

\noindent \textbf{Comparison with the actual solver.}
The actual solver uses the reflection $R_{i,z_i}$ from \Cref{eq:marking-operator}, so its Grover iteration on $\reg{A}_i\reg{B}_i$ is
\[
  G_i:=(R_{\mathsf{Init}}\otimes I_{\reg{B}_i})R_{i,z_i}.
\]
Starting from the same state $\ket{\Phi_i}$, the actual and ideal searches therefore produce $G_i^g\ket{\Phi_i}$ and $\widetilde{G}_i^g\ket{\Phi_i}$, respectively. For fixed generation data and oracles, we first bound the distance between these states and then use this bound to control the actual solver's failure probability.

\begin{claim}
\label{clm:solver-distance}
For every  $i \in [\secp]$, define the quantity
\[
  W_i:=\sum_{\substack{a\in\{0,1\}^k\\a\ne a_i}}
  \left\|\Pi_{i,a,z_i}U_{\mathcal C_{\secp,k}(a)}^\dagger\ket{\psi_i}\right\|^2.
\]
Then
\begin{equation*}
  \left\|(G_i^g-\widetilde{G}_i^g)
  \ket{\Phi_i}\right\|
  \leq\sqrt{W_i}.
\end{equation*}
\end{claim}
\begin{proof}
We prove the claim by a hybrid argument. Consider the telescoping identity
\[
  G_i^g - \widetilde{G}_i^g
  =\sum_{q=0}^{g-1} G_i^{g-1-q} (G_i - \widetilde{G}_i) \widetilde{G}_i^q.
\]
To verify this identity, expand the difference in each summand. For $0\leq q\leq g-1$,
\[
  G_i^{g-1-q}(G_i-\widetilde{G}_i)\widetilde{G}_i^q
  =G_i^{g-q}\widetilde{G}_i^q
   -G_i^{g-1-q}\widetilde{G}_i^{q+1}.
\]
Summing these differences and reindexing the second sum gives
\begin{align*}
  \sum_{q=0}^{g-1}G_i^{g-1-q}(G_i-\widetilde{G}_i)\widetilde{G}_i^q
  &=\sum_{q=0}^{g-1}G_i^{g-q}\widetilde{G}_i^q
    -\sum_{q=1}^{g}G_i^{g-q}\widetilde{G}_i^q\\
  &=G_i^g-\widetilde{G}_i^g.
\end{align*}
Every term with index $1\leq q\leq g-1$ appears once with each sign and cancels. The only surviving terms are the $q=0$ term $G_i^g$ from the first sum and the $q=g$ term $-\widetilde{G}_i^g$ from the second sum. The order of the operators is preserved throughout; no commutativity assumption is needed.

By the above identity, the triangle inequality, and unitary invariance of the Euclidean norm,
\begin{align*}
  \left \|(G_i^g - \widetilde{G}_i^g) \ket{\Phi_i} \right\|
  &\leq \sum_{q=0}^{g-1} \left\|(G_i - \widetilde{G}_i) \widetilde{G}_i^q \ket{\Phi_i} \right\|\\
  &=\sum_{q=0}^{g-1} \left\|
  (R_{i,z_i} - R_i\otimes I_{\reg{B}_i}) \widetilde{G}_i^q \ket{\Phi_i} \right\|.
\end{align*}
We now evaluate each summand. Define $\ket{\mathsf{Bad}_i}:=(\ket{\mathsf{Init}}-\sin\theta\ket{a_i})/\cos\theta$. This is a unit vector orthogonal to $\ket{a_i}$. For $0\leq q\leq g-1$, we have
\[
  \widetilde{G}_i^q\ket{\Phi_i}
  =\bigl(\sin((2q+1)\theta)\ket{a_i}
  +\cos((2q+1)\theta)\ket{\mathsf{Bad}_i}\bigr)\ket{\psi_i}.
\]
Thus, after $q$ ideal Grover iterations, the amplitude of each incorrect search label $\ket a$, with $a\in\{0,1\}^k\setminus\{a_i\}$, in register $\reg{A}_i$ is
\[
\alpha_q := \cos((2q+1)\theta) \cdot \frac{\sin\theta}{\cos\theta}
= \cos((2q+1)\theta) \tan\theta \geq 0.
\]Next, the two reflections agree on every state whose search register has label $\bot$: for any token state $\ket v_{\reg{B}_i}$,
\[
  R_{i,z_i}(\ket\bot_{\reg{A}_i}\otimes\ket v_{\reg{B}_i})
  =(R_i\otimes I_{\reg{B}_i})(\ket\bot_{\reg{A}_i}\otimes\ket v_{\reg{B}_i})
  =\ket\bot_{\reg{A}_i}\otimes\ket v_{\reg{B}_i}.
\]
Here $R_{i,z_i}$ acts as the identity on this subspace by its definition in \Cref{eq:marking-operator}, while $R_i\ket\bot=\ket\bot$ because $\langle a_i|\bot\rangle=0$.

 On the particular joint state $\ket{a_i}_{\reg{A}_i}\otimes\ket{\psi_i}_{\reg{B}_i}$, both reflections apply a phase of $-1$. Explicitly,
\[
  R_{i,z_i}(\ket{a_i}_{\reg{A}_i}\otimes\ket{\psi_i}_{\reg{B}_i})
  =(R_i\otimes I_{\reg{B}_i})(\ket{a_i}_{\reg{A}_i}\otimes\ket{\psi_i}_{\reg{B}_i})
  =-\ket{a_i}_{\reg{A}_i}\otimes\ket{\psi_i}_{\reg{B}_i}.
\]
For the ideal reflection, $R_i\ket{a_i}=-\ket{a_i}$ by definition. For the actual reflection, decoding gives $U_{\mathcal C_{\secp,k}(a_i)}^\dagger\ket{\psi_i}=\ket{x_i}$, and $\Pi_{i,a_i,z_i}\ket{x_i}=\ket{x_i}$ because $H_i(a_i,x_i)=z_i$. Thus $I-2\Pi_{i,a_i,z_i}$ sends $\ket{x_i}$ to $-\ket{x_i}$, and re-encoding gives $-\ket{\psi_i}$, while the search register remains in $\ket{a_i}$.

 For an incorrect candidate $a\ne a_i$, their difference on the token $\ket{\psi_i}$ is
\[
  U_{\mathcal C_{\secp,k}(a)}(I-2\Pi_{i,a,z_i})
  U_{\mathcal C_{\secp,k}(a)}^\dagger \ket{\psi_i} - \ket{\psi_i}
  = -2 U_{\mathcal C_{\secp,k}(a)}\Pi_{i,a,z_i}
  U_{\mathcal C_{\secp,k}(a)}^\dagger\ket{\psi_i}.
\]
Putting things together, we have
\begin{align*}
& \big\|(R_{i,z_i} - R_i \otimes I_{\reg{B}_i})
  \widetilde{G}_i^q \ket{\Phi_i} \big\| \\
  = & \bigg\|\sum_{a \in \{0,1\}^k:a \neq a_i}  -2 \alpha_q \ket{a}_{\reg{A}_i} \otimes U_{\mathcal C_{\secp,k}(a)}\Pi_{i,a,z_i}
  U_{\mathcal C_{\secp,k}(a)}^\dagger\ket{\psi_i} \bigg\| \\
  = & 2 \alpha_q \sqrt{ \sum_{a \in \{0,1\}^k:a \neq a_i}  \bigg\|U_{\mathcal C_{\secp,k}(a)}\Pi_{i,a,z_i} U_{\mathcal C_{\secp,k}(a)}^\dagger\ket{\psi_i} \bigg\|^2} \\
  = & 2 \tan\theta \cos((2q+1)\theta) \sqrt{W_i}.
\end{align*}
Summing up the errors,
\[
  \left\|(G_i^g-\widetilde{G}_i^g)\ket{\Phi_i}\right\|
  \leq 2 \sqrt{W_i}\tan\theta
  \sum_{q=0}^{g-1}\cos((2q+1)\theta).
\]
Finally, $(2g+1)\theta=\pi/2$ implies
\[
  \sum_{q=0}^{g-1}\cos((2q+1)\theta)
  =\frac{\sin(2g\theta)}{2\sin\theta}
  =\frac{\cos\theta}{2\sin\theta}
  = \frac{\cot\theta}{2}.
\]
Substituting into the above bound and using $\tan\theta\cot\theta=1$ gives the claimed upper bound $\sqrt{W_i}$.
\end{proof}
To relate this state-distance bound to the solver's output, let
\[
  P_i:=I_{\reg{A}_i\reg{B}_i}
       -\proj{a_i}_{\reg{A}_i}\otimes\proj{\psi_i}_{\reg{B}_i}.
\]
This is the failure projector for recovering the pair $(a_i,x_i)$, expressed before the final decoding and measurement. Indeed, the solver first measures $\reg{A}_i$, then applies $U_{\mathcal C_{\secp,k}(a'_i)}^\dagger$ to $\reg{B}_i$ and measures it; the outcome $(a_i,x_i)$ therefore corresponds to the rank-one projector onto $\ket{a_i}_{\reg{A}_i}\ket{\psi_i}_{\reg{B}_i}$. In particular, $P_i$ also counts an outcome of $\bot$ as failure. The ideal final state is annihilated by $P_i$, so
\begin{align*}
\Pr[(a'_i,x'_i)\ne(a_i,x_i)]
&=\|P_iG_i^g\ket{\Phi_i}\|^2\\
&=\|P_i(G_i^g-\widetilde{G}_i^g)\ket{\Phi_i}\|^2\\
&\leq\|(G_i^g-\widetilde{G}_i^g)\ket{\Phi_i}\|^2
\leq W_i,
\end{align*}
where this probability is over the solver's measurements for fixed generation data and oracles. We used that $P_i$ is a contraction and then applied \Cref{clm:solver-distance}.

It remains to bound the expectation of $W_i$. Condition on $a_i,x_i,z_i$. For any $a\ne a_i$ and $x\in\{0,1\}^n$, the value $H_i(a,x)$ remains uniform on $\{0,1\}^\secp$. Therefore, 
\begin{align*}
& \E_{H_i}[\Pi_{i,a,z_i} | H_i(a_i,x_i) = z_i ] \\
= & \E_{H_i}\left[ \sum_{x:H_i(a,x) = z_i} \proj{x} \middle| \, H_i(a_i,x_i) = z_i \right] \\
= & \E_{H_i}\left[ \sum_x \mathbf{I}[H_i(a,x) = z_i] \proj{x} \middle| H_i(a_i,x_i) = z_i \right] \\
= & \sum_x \Pr_{H_i}\left[H_i(a,x) = z_i \mid H_i(a_i,x_i) = z_i\right] \proj{x} \\
= & \sum_{x} 2^{-\secp} \proj{x} \\
= & 2^{-\secp} I_{\reg{B}_i}.
\end{align*}
Since $\Pi_{i,a,z_i}$ is a projector and $U_{\mathcal C_{\secp,k}(a)}^\dagger\ket{\psi_i}$ is a unit vector determined by $(a_i,x_i)$, each summand in $W_i$ has conditional expectation $2^{-\secp}$. There are $2^k-1$ incorrect candidates $a\ne a_i$, so averaging also over the generation data gives
\[
  \E[W_i]=\frac{2^k-1}{2^\secp}.
\]
Combining this with the measurement bound above, the probability that the solver fails to recover the pair $(a_i,x_i)$, now over the random oracles, puzzle generation, and the solver's measurements, is at most $(2^k-1)/2^\secp$. A union bound over all $\secp$ tokens therefore gives
\[
  \Pr[\text{some pair }(a'_i,x'_i)\ne(a_i,x_i)]
  \leq\frac{\secp(2^k-1)}{2^\secp}.
\]
If every pair is recovered correctly, the solver queries $F$ on exactly the tuple used by the generator. Consequently, $\mathsf{msk}'=\msk$ and the output is $\sol'=c\oplus\mathsf{msk}'=(\sol\oplus\msk)\oplus\msk=\sol$. Thus,
\[
  \Pr[\sol'=\sol]\geq1-\frac{\secp(2^k-1)}{2^\secp}.
\]
Here an abort is counted as failure. For every polynomially bounded delay $T=T(\secp)$, we have $2^k\leq(T+1)^2/4$, so the numerator $\secp(2^k-1)$ is polynomial in $\secp$, whereas the denominator is $2^\secp$. The failure probability of the actual solver with exact gates is therefore negligible in $\secp$. By \Cref{lem:efficiency}, the finite-gate implementation adds at most $2^{-\secp}$ to this failure probability, so its success probability is at least
\[
  1-\frac{\secp(2^k-1)}{2^\secp}-2^{-\secp}
  =1-\mathrm{negl}(\secp).
\]
This holds for every solution bit $\sol$, as required for correctness.
\end{proof}

\section{Proof of Security}
\label{sec:security-proof}

\subsection{Useful Lemmata}

\begin{lemma}[{\cite[Lemma~2]{TFKW13}}]
\label{lem:tfkw-operator-sum}
Let \(A_1,\ldots,A_m\allowbreak\in\mathcal{P}(\mathcal{H})\)\footnote{$\mathcal{P}(\mathcal{H})$ is the set of all positive semi-definite operators on $\mathcal{H}$.}, and let \(\{\pi^k\}_{k=1}^{m}\) be a family of mutually orthogonal permutations of \([m]\); namely, for every \(i\in[m]\) and every distinct \(k,k'\in[m]\),
\[
    \pi^k(i)\neq \pi^{k'}(i).
\]
Then
\[
    \left\|
        \sum_{i=1}^{m} A_i
    \right\|_{\infty}
    \leq
    \sum_{k=1}^{m}
    \max_{i\in[m]}
    \left\|
        \sqrt{A_i}\sqrt{A_{\pi^k(i)}}
    \right\|_{\infty}.
\]
\end{lemma}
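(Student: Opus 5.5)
The plan is the standard block-matrix argument of Tomamichel, Fehr, Kaniewski, and Wehner. First, write each $A_i=B_i^\dagger B_i$ with $B_i:=\sqrt{A_i}$. Form the column operator $B:=\sum_{i=1}^m \ket{i}\otimes B_i$, mapping $\mathcal H$ into $\mathbb C^m\otimes\mathcal H$. Then
\[
B^\dagger B=\sum_{i=1}^m A_i .
\]
Using $\|B^\dagger B\|_\infty=\|BB^\dagger\|_\infty$, it suffices to bound the operator norm of the Gram-type block matrix
\[
G:=BB^\dagger=\sum_{i,j}\ket{i}\!\bra{j}\otimes \sqrt{A_i}\sqrt{A_j}.
\]

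Next, decompose $G$ along the permutations. Because the permutations $\pi^1,\ldots,\pi^m$ are mutually orthogonal, for each fixed row $i$ the values $\pi^k(i)$, $k\in[m]$, are distinct. They therefore exhaust $[m]$, so every block position $(i,j)$ is covered by exactly one $k$ with $j=\pi^k(i)$. This gives
\[
G=\sum_{k=1}^m G_k,\qquad G_k:=\sum_{i=1}^m \ket{i}\!\bra{\pi^k(i)}\otimes \sqrt{A_i}\sqrt{A_{\pi^k(i)}}.
\]
The triangle inequality then yields $\|G\|_\infty\le\sum_k\|G_k\|_\infty$.

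The remaining step, which is the only one needing care, is to show $\|G_k\|_\infty=\max_i\|\sqrt{A_i}\sqrt{A_{\pi^k(i)}}\|_\infty$. Since $\pi^k$ is a permutation, $G_k$ factors as $D_k(P_k\otimes I)$. Here $P_k:=\sum_i\ket{i}\!\bra{\pi^k(i)}$ is a permutation matrix, hence unitary, and $D_k:=\sum_i\proj{i}\otimes \sqrt{A_i}\sqrt{A_{\pi^k(i)}}$ is block diagonal. Multiplying by a unitary preserves the operator norm, and the norm of a block-diagonal operator is the maximum of the norms of its blocks. This gives the identity.

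Combining the three steps proves the lemma. The main subtlety is verifying that mutual orthogonality of the permutations gives an exact partition of the block positions, with no position double-counted or missed. This is what makes $G=\sum_k G_k$ an equality rather than an inequality.
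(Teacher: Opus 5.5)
Your proposal is correct and complete, and it is essentially the original block-matrix proof of Tomamichel, Fehr, Kaniewski, and Wehner: write $\sum_i A_i=B^\dagger B$, pass to $BB^\dagger$, split it along the permutations into pieces that are each a block-diagonal operator times a permutation unitary, and apply the triangle inequality. The paper does not reprove this lemma and uses it as a black box from that reference, so your argument reproduces the intended source proof.
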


\begin{corollary}
\label{cor:projector-sum-coherence}
Let \(P_i=\lvert\psi_i\rangle\!\langle\psi_i\rvert\), for
\(i\in[m]\), be rank-one projectors satisfying
\[
    \max_{i\neq j}
    \bigl|\langle\psi_i\mid\psi_j\rangle\bigr|
    \leq \varepsilon.
\]
Then
\[
    \left\|
        \sum_{i=1}^{m}P_i
    \right\|_{\infty}
    \leq 1+(m-1) \varepsilon.
\]
\end{corollary}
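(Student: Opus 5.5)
We apply \Cref{lem:tfkw-operator-sum} with $A_i:=P_i$ and the cyclic family of permutations $\pi^k(i):=1+((i-1+k-1)\bmod m)$ for $k\in[m]$.

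\textbf{Orthogonality of the family.} For fixed $i$, distinct $k,k'\in[m]$ give distinct residues $i-1+k-1 \bmod m$, so $\pi^k(i)\neq\pi^{k'}(i)$. Hence the hypothesis of \Cref{lem:tfkw-operator-sum} holds.

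\textbf{Evaluating the terms.} Each $P_i$ is a projector, so $\sqrt{P_i}=P_i$. The $k$th term on the right-hand side is therefore $\max_i\|P_iP_{\pi^k(i)}\|_\infty$.
\begin{itemize}
\item For $k=1$, $\pi^1$ is the identity. Then $P_iP_i=P_i$ has operator norm $1$ (assuming the $\ket{\psi_i}$ are unit vectors), so this term contributes $1$.
\item For $k\ge2$, the shift $k-1\in\{1,\ldots,m-1\}$ is nonzero mod $m$, so $\pi^k(i)\neq i$ for every $i$. For $j\neq i$,
\[
P_iP_j=\ket{\psi_i}\langle\psi_i\mid\psi_j\rangle\bra{\psi_j}.
\]
This is a rank-one operator with norm $\bigl|\langle\psi_i\mid\psi_j\rangle\bigr|\cdot\|\ket{\psi_i}\|\,\|\ket{\psi_j}\|=\bigl|\langle\psi_i\mid\psi_j\rangle\bigr|\le\varepsilon$. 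So each of these $m-1$ terms contributes at most $\varepsilon$.
\end{itemize}

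Summing the terms gives $\|\sum_iP_i\|_\infty\le 1+(m-1)\varepsilon$. There is no real obstacle. The only points needing care are that the cyclic shifts form an orthogonal family, and that every nonidentity shift is fixed-point-free, so that each of those terms pairs distinct projectors. The norm of $\ket{a}\bra{b}$ equals $\|a\|\|b\|$, which is routine.
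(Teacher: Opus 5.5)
Your proof is correct and follows the paper's argument: apply \Cref{lem:tfkw-operator-sum} with the cyclic shifts. The identity shift contributes $1$, and each of the $m-1$ fixed-point-free shifts contributes at most $\varepsilon$ via $\|P_iP_j\|_\infty=|\langle\psi_i\mid\psi_j\rangle|$. The only cosmetic difference is that you index the shifts by $k\in[m]$ with offset $k-1$, whereas the paper uses $k\in\{0,\ldots,m-1\}$.
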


\begin{proof}
For \(k\in\{0,\ldots,m-1\}\), let
\[
    \pi^k(i)=1+((i-1+k)\bmod m)
\]
denote the cyclic shifts of \([m]\). These permutations are mutually
orthogonal. Applying Lemma~\ref{lem:tfkw-operator-sum} and using
\(\sqrt{P_i}=P_i\), we obtain
\[
\begin{aligned}
    \left\|\sum_{i=1}^{m}P_i\right\|_{\infty}
    &\leq
    \sum_{k=0}^{m-1}
    \max_{i\in[m]}
    \left\|P_iP_{\pi^k(i)}\right\|_{\infty} \\
    &\leq 1+(m-1) \varepsilon,
\end{aligned}
\]
where we used
\[
    \|P_iP_j\|_{\infty}
    =
    \bigl|\langle\psi_i\mid\psi_j\rangle\bigr|. \qedhere
\]
\end{proof}
The following lemma is crucial in our security proof.
\begin{lemma}[List-one-wayness lemma]
\label{lem:hidden-pair-list}
Let $\{U_a\}_{a\in\{0,1\}^k} \subset \mathrm{U}(2^n)$ be a set of unitaries satisfying 
\[
|\bra{x} U^\dagger_a U_{a'} \ket{x'}| \leq \varepsilon
\]
for any distinct pairs $(a,x), (a',x') \in \{0,1\}^k \times \{0,1\}^n$. Let $p > 0$ be an integer and $\varepsilon\geq0$. Consider the following experiment:
\begin{enumerate}
    \item The challenger independently samples $a \gets \{0,1\}^k$ and $x \gets \{0,1\}^n$.
    \item The challenger sends one copy of $U_a \ket{x}$ to the adversary which is computationally unbounded.
    \item The adversary outputs a list $L\subseteq\{0,1\}^k\times\{0,1\}^n$ of $p$ distinct pairs.
    \item The adversary wins if $(a,x)\in L$.
\end{enumerate}
Then the winning probability of any adversary is at most $\frac{1+ (p-1)\varepsilon}{2^k}$.
\end{lemma}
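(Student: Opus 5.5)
We plan to follow the operator-norm route sketched in the technical overview.

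\textbf{Describing the adversary.} Since the adversary is computationally unbounded and receives only the single token $U_a\ket{x}$, we model its entire strategy (any ancillas, unitaries and final measurement) as a POVM $\{M_L\}_L$ on the $n$-qubit token register. Here $L$ ranges over lists of $p$ distinct pairs. For fixed $(a,x)$, the probability of outputting $L$ is $\Tr(M_L P_{a,x})$, where $P_{a,x}:=U_a\proj{x}U_a^\dagger$. Averaging over the uniform $(a,x)$, the winning probability equals
\[
\frac{1}{2^k2^n}\sum_{(a,x)}\sum_{L\ni(a,x)}\Tr(M_LP_{a,x})
=\frac{1}{2^{k+n}}\sum_L\Tr(M_LA_L),
\qquad A_L:=\sum_{(a,x)\in L}P_{a,x},
\]
which is just an exchange of the order of summation.

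\textbf{Bounding each $A_L$.} Fix a list $L$. The $p$ vectors $U_a\ket{x}$ with $(a,x)\in L$ are unit vectors, and for distinct pairs $|\bra{x}U_a^\dagger U_{a'}\ket{x'}|\le\varepsilon$ by hypothesis. Applying \Cref{cor:projector-sum-coherence} with $m=p$ gives
\[
\|A_L\|_\infty\le 1+(p-1)\varepsilon.
\]
When $p=1$ the bound is trivially $1$.

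\textbf{Concluding.} Since $M_L\succeq0$, we have $\Tr(M_LA_L)\le\|A_L\|_\infty\Tr(M_L)$. Summing over $L$ and using $\sum_L\Tr(M_L)=\Tr(I)=2^n$ bounds the winning probability by
\[
\frac{(1+(p-1)\varepsilon)\,2^n}{2^{k+n}}=\frac{1+(p-1)\varepsilon}{2^k}.
\]

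\textbf{Main obstacle.} The only step needing care is the operator-norm bound on $A_L$, and it is supplied entirely by \Cref{cor:projector-sum-coherence}. The remaining checks are routine. First, restricting to lists of exactly $p$ distinct pairs is harmless: a shorter list can be padded, which only increases the winning probability. Second, the adversary's possible randomness or internal state is absorbed into the POVM, so modelling it by $\{M_L\}$ loses no generality.
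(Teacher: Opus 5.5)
Your proposal is correct and follows the paper's proof step for step. The steps match: the POVM $\{M_L\}$ on the token register, the exchange of summation giving $\frac{1}{2^{k+n}}\sum_L \Tr(M_L A_L)$, the bound $\|A_L\|_\infty \le 1+(p-1)\varepsilon$ from the TFKW inequality with cyclic shifts (the paper applies it inline, exactly as in \Cref{cor:projector-sum-coherence}), and the final trace bound. The only cosmetic difference is that the paper allows lists with $|L|\le p$ and uses $1+(m-1)\varepsilon\le 1+(p-1)\varepsilon$, whereas you pad to exactly $p$ pairs.
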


\begin{proof}
Let
\[
    \mathcal{I}
    :=
    \{0,1\}^k\times\{0,1\}^n
\]
be the set of possible hidden pairs, and define
\[
    \ket{\psi_{a,x}}:=U_a\ket{x},
    \qquad
    P_{a,x}:=
    \ket{\psi_{a,x}}\!\bra{\psi_{a,x}}.
\]
The premise says that for any two distinct pairs $(a,x)\neq(a',x')$, we have
\[
    \left|
        \braket{\psi_{a,x}}{\psi_{a',x'}}
    \right|
    \leq \varepsilon.
\]
Repeated pairs do not increase the adversary's winning probability, so we represent its output as a set $L\subseteq\mathcal I$ with $|L|\leq p$. The adversary's strategy is described by a POVM:
\[
    \{M_L\}_{L\in\mathcal{L}_p},
    \qquad
    \mathcal{L}_p
    :=
    \{L\subseteq\mathcal{I}:|L|\leq p\},
\]
where
\[
    M_L \geq 0
    \qquad \text{and} \qquad
    \sum_{L\in\mathcal{L}_p} M_L = I_{2^n}.
\]
Fix a nonempty list of size $m:=|L|\leq p$,
\[
    L=\{(a_1,x_1),\ldots,(a_m,x_m)\},
\]
and define
\[
    A_L:=\sum_{i=1}^{m}P_{a_i,x_i}.
\]
For $t\in\{0,\ldots,m-1\}$, let $\pi^t$ be the cyclic permutation
\[
    \pi^t(i):=1+\bigl((i-1+t)\bmod m\bigr).
\]
The permutations $\{\pi^t\}_{t=0}^{m-1}$ are mutually orthogonal.
Therefore, Lemma~\ref{lem:tfkw-operator-sum} gives
\begin{align*}
    \|A_L\|_\infty
    &\leq
    \sum_{t=0}^{m-1}
    \max_{i\in[m]}
    \left\|
        P_{a_i,x_i}
        P_{a_{\pi^t(i)},x_{\pi^t(i)}}
    \right\|_\infty.
\end{align*}
The term corresponding to $t=0$ is equal to $1$. For every
$t\neq 0$, the two pairs are distinct, and hence
\begin{align*}
    \left\|
        P_{a_i,x_i}
        P_{a_{\pi^t(i)},x_{\pi^t(i)}}
    \right\|_\infty
    &=
    \left|
        \braket{\psi_{a_i,x_i}}
                {\psi_{a_{\pi^t(i)},x_{\pi^t(i)}}}
    \right|
    \leq \varepsilon.
\end{align*}
It follows that
\begin{equation}
    \|A_L\|_\infty
    \leq
    1+(m-1)\varepsilon\leq 1+(p-1)\varepsilon.
    \label{eq:hidden-pair-list-operator-bound}
\end{equation}
For $L=\varnothing$, set $A_L=0$; the same upper bound holds.
The adversary's winning probability therefore satisfies
\begin{align*}
    \Pr[\mathrm{win}]
    &=
    \frac{1}{2^{k+n}}
    \sum_{(a,x)\in \mathcal{I}}
    \sum_{\substack{
        L\in\mathcal{L}_p: \\
    (a,x) \in L}}
    \operatorname{Tr}\!\left(M_LP_{a,x}\right) \\
    &=
    \frac{1}{2^{k+n}}
    \sum_{L\in\mathcal{L}_p}
    \sum_{(a,x)\in L}
    \operatorname{Tr}\!\left(M_LP_{a,x}\right) \tag{by double counting} \\
    &=
    \frac{1}{2^{k+n}}
    \sum_{L\in\mathcal{L}_p}
    \operatorname{Tr}\!\left(M_LA_L\right) \tag{by definition of $A_L$} \\
    &\leq
    \frac{1}{2^{k+n}}
    \sum_{L\in\mathcal{L}_p}
    \|A_L\|_\infty \operatorname{Tr}(M_L) \tag{by $A_L \leq \|A_L\|_\infty I_{2^n}$ and $M_L \geq 0$} \\
    &\leq
    \frac{1+(p-1)\varepsilon}{2^{k+n}}
    \sum_{L\in\mathcal{L}_p}
    \operatorname{Tr}(M_L) \tag{by~\Cref{eq:hidden-pair-list-operator-bound}} \\
    &=
    \frac{1+(p-1)\varepsilon}{2^{k+n}}
    \operatorname{Tr}(I_{2^n}) \\
    &=
    \frac{1+(p-1)\varepsilon}{2^k}. \qedhere
\end{align*}
\end{proof}

Next, we consider a variant of~\Cref{lem:hidden-pair-list}  in the QROM. The differences from the experiment in~\Cref{lem:hidden-pair-list} are highlighted in red.
\begin{lemma} \label{lem:hidden-pair-list-oracle}
Let $\{U_a\}_{a\in\{0,1\}^k} \subset \mathrm{U}(2^n)$ be a set of unitaries satisfying 
\[
|\bra{x} U^\dagger_a U_{a'} \ket{x'}| \leq \varepsilon
\]
for any distinct pairs $(a,x), (a',x') \in \{0,1\}^k \times \{0,1\}^n$. Let  $p>0$\textcolor{red}{, $r\geq0$} be integers. Consider the following experiment $G$:
\begin{enumerate}
    \item \textcolor{red}{Let $H:\{0,1\}^k \times \{0,1\}^n \to \{0,1\}^\secp$ be a random oracle.}
    \item The challenger independently samples $a \gets \{0,1\}^k$ and $x \gets \{0,1\}^n$. \textcolor{red}{Set $z := H(a,x)$}.
    \item The challenger sends one copy of $U_a \ket{x}$ \textcolor{red}{and $z$} to the adversary, which is computationally unbounded.
    \item \textcolor{red}{The adversary makes $r$ rounds of $p$-parallel queries to $H$.}
    \item \textcolor{red}{The adversary outputs $(a',x') \in \{0,1\}^k \times \{0,1\}^n$.}
    \item \textcolor{red}{The experiment outputs $1$ if $(a',x') = (a,x)$.}
\end{enumerate}
Then
\begin{equation}
  \Pr[G=1]
  \leq
  \frac{1}{2^k}
  +2r\sqrt{\frac{1+(p-1)\varepsilon}{2^k}}.
  \label{eq:oracle-list-bound}
\end{equation}
\end{lemma}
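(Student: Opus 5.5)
The proof is a round-by-round reprogramming argument in the spirit of the one-way-to-hiding lemma~\cite{Unr13,AHU19}, with \Cref{lem:hidden-pair-list} supplying the list bound at each round.

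First I recast the real experiment. Sample $H$ and an independent uniform $z\in\{0,1\}^\secp$, independent of $(a,x)$, and let $\widetilde H$ equal $H$ except that $\widetilde H(a,x):=z$. Then $(\widetilde H,z)$ is distributed exactly as $(H,H(a,x))$ in $G$, so $\Pr[G=1]$ is the success probability when the adversary receives $U_a\ket{x}$ and $z$ and queries $\widetilde H$ in all $r$ rounds. Fix the adversary's unitaries and write its run as alternating unitaries and round operators $O^{\otimes p}$.

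For $j=0,\ldots,r$, hybrid $\Hyb_j$ answers the first $j$ rounds with $H$ and the remaining rounds with $\widetilde H$. Then $\Hyb_0$ is the real experiment and $\Hyb_r$ uses only $H$.

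In $\Hyb_r$, the oracle $H$ and tag $z$ are independent of $(a,x)$. The adversary together with the sampling of $H$ and $z$ is therefore a single plain-model algorithm acting on $U_a\ket{x}$. Purifying $H$, or using a sufficiently independent simulation (unbounded adversaries can sample the whole function table), this algorithm outputs one pair. \Cref{lem:hidden-pair-list} with $p=1$ bounds its success probability by $2^{-k}$.

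Adjacent hybrids $\Hyb_j$ and $\Hyb_{j+1}$ differ only in round $j+1$, where $H$ is replaced by $\widetilde H$. Let $\ket{\phi_j}$ be the joint state immediately before that round; it is produced using only $H$. The two round operators $O_H^{\otimes p}$ and $O_{\widetilde H}^{\otimes p}$ agree on every query basis state with no input register equal to $(a,x)$. Hence
\[
\bigl\|(O_H^{\otimes p}-O_{\widetilde H}^{\otimes p})\ket{\phi_j}\bigr\|\leq 2\bigl\|\Pi_{(a,x)}\ket{\phi_j}\bigr\|,
\]
where $\Pi_{(a,x)}$ projects onto query inputs in which some register equals $(a,x)$. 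The factor 2 comes from writing the difference as the two operators each applied to $\Pi_{(a,x)}\ket{\phi_j}$. Taking expectations and applying Jensen, the distance between the final states of the two hybrids, and hence the difference in success probabilities, is at most $2\sqrt{\E\|\Pi_{(a,x)}\ket{\phi_j}\|^2}$. This uses the standard fact that success probabilities differ by at most the Euclidean distance of the states, which is the bound in~\cite{AHU19}.

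The quantity $\E\|\Pi_{(a,x)}\ket{\phi_j}\|^2$ is the probability that measuring the $p$ query input registers of $\ket{\phi_j}$ yields a list containing $(a,x)$. Since $\ket{\phi_j}$ is generated from $U_a\ket{x}$ using $H$ and $z$, both independent of $(a,x)$, this measurement is a plain-model list adversary outputting at most $p$ pairs. Duplicates are discarded, as the lemma permits. \Cref{lem:hidden-pair-list} then gives the bound $\delta:=(1+(p-1)\varepsilon)/2^k$, so adjacent hybrids differ by at most $2\sqrt\delta$. Summing the $r$ hops and adding the $\Hyb_r$ bound gives \Cref{eq:oracle-list-bound}.

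The step I expect to be the main obstacle is making the simulation claims rigorous. In both places the adversary's use of $H$ and $z$ must be absorbed into a plain-model POVM on the single copy of $U_a\ket{x}$. This requires an unbounded adversary to sample the entire table of $H$ classically, which is fine information-theoretically.

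A second delicate point is the expectation over $(a,x)$, $H$ and $z$. The hybrid distance should be bounded on average, with Jensen applied to $\E\sqrt{\cdot}$, so that the averaged list probability from \Cref{lem:hidden-pair-list} can be used, rather than a bound holding pointwise for each $(a,x)$.
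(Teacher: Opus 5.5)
Your proposal is correct and follows essentially the same route as the paper: replace the tag by an independent $z$ and reprogram $H$ at $(a,x)$, switch one round at a time from $\widetilde H$ to $H$ (your $\Hyb_j$ is the paper's $\Hyb(j+1)$), and bound each hop by $2\sqrt{\delta}$ via the standard hybrid argument, with the per-round query list controlled by \Cref{lem:hidden-pair-list} through simulating the unprogrammed oracle; the terminal hybrid is bounded by the $p=1$ case of that lemma. Your write-up is somewhat more explicit than the paper's, spelling out that $O_H^{\otimes p}$ and $O_{\widetilde H}^{\otimes p}$ agree off $\Pi_{(a,x)}$ and giving the Jensen averaging step, both of which the paper leaves to the cited references.
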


\begin{proof}
We prove the lemma by a standard QROM reprogramming argument. Fix an adversary $A$. Consider the following sequence of hybrids, which gradually decouples the tag $z$ from the oracle $H$. The differences are highlighted in red:

\noindent $\Hyb(j)$ for $j \in \{1,\ldots,r+1\}$:
\begin{enumerate}
\item Let $H:\{0,1\}^k \times \{0,1\}^n \to \{0,1\}^\secp$ be a random oracle.
\item The challenger samples independently $a \gets \{0,1\}^k$, $x \gets \{0,1\}^n$, and \textcolor{red}{$z \gets \{0,1\}^\secp$}.
\item The challenger sends one copy of $U_a \ket{x}$ and $z$ to the adversary $A$.
\item The adversary $A$ makes $r$ rounds of $p$-parallel queries to $H$ in the following way: \textcolor{red}{right before the $j$-th round, the challenger reprograms $H$ at $(a,x)$ to $z$.}
\item The adversary outputs $(a',x') \in \{0,1\}^k \times \{0,1\}^n$.
\item The game outputs $1$ if $(a',x') = (a,x)$.
\end{enumerate}
Informally, the proof follows this chain:
\[
  \begin{gathered}
    G \equiv \Hyb(1)\ \approx \ \Hyb(2)\ \approx \ \cdots\ \approx \ \Hyb(r+1), \qquad
    \Pr[\Hyb(r+1)=1]\leq\frac{1}{2^k}.
  \end{gathered}
\]
The experiment $G$ is identically distributed to $\Hyb(1)$. We next bound each adjacent hop. The only difference between $\Hyb(j)$ and $\Hyb(j+1)$ is whether the $j$-th round is answered by an oracle reprogrammed at $(a,x)$. Define $\delta^2$ as the probability that measuring all query addresses immediately before this round yields a list containing $(a,x)$, averaged over the randomness of the experiment. The two parallel query operators agree outside this event, so the standard hybrid argument~\cite{BBBV97,AHU19} gives
\[
|\Pr[\Hyb(j)=1] - \Pr[\Hyb(j+1)=1]| \leq 2\delta.
\]
Before round $j$, the oracle has not yet been reprogrammed. A computationally unbounded adversary in the experiment of \Cref{lem:hidden-pair-list} can therefore sample $H$ and $z$ independently, simulate the first $j-1$ rounds, and measure the next query addresses to obtain a list $L$ of at most $p$ pairs. By \Cref{lem:hidden-pair-list},
\[
  \delta^2=\Pr[(a,x)\in L]\leq\frac{1+(p-1)\varepsilon}{2^k}.
\]
Consequently, each adjacent hop has error at most $2\sqrt{(1+(p-1)\varepsilon)/2^k}$.

In $\Hyb(r+1)$, the oracle is never reprogrammed, so $H$ and $z$ are independent of $(a,x)$. The same simulation reduces this terminal experiment to \Cref{lem:hidden-pair-list} with $p=1$, giving $\Pr[\Hyb(r+1)=1]\leq 1/2^k$. Summing the $r$ adjacent-hop bounds proves \Cref{eq:oracle-list-bound}.
\end{proof}

By parallel repetition~\cite[Theorem~4.9]{Gut10}, we obtain the following consequence of \Cref{lem:hidden-pair-list-oracle}.
\begin{corollary}
\label{cor:hidden-pair-parallel-repetition}
Under the same assumption on the set of unitaries as in~\Cref{lem:hidden-pair-list-oracle}, consider $\secp$ independent copies of its experiment, with independent $(a_i,x_i)$ and random oracles $H_i$. A computationally unbounded adversary receives all challenge states and tags and makes $r$ rounds of at most $p$ parallel queries to each $H_i$, allowing arbitrary joint quantum computation across the copies. If it outputs a candidate pair $(a_i',x_i')$ for each $i\in[\secp]$, then
\begin{equation*}
  \Pr\!\left[(a_i',x_i')=(a_i,x_i)\text{ for every }i\in[\secp]\right]
  \leq \beta_{r,p,\varepsilon,N}^{\secp},
\end{equation*}
where $\beta_{r,p,\varepsilon,N}$ denotes the right-hand side of \Cref{eq:oracle-list-bound}.
\end{corollary}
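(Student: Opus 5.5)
We will obtain the corollary by casting each copy of the experiment of \Cref{lem:hidden-pair-list-oracle} as a fixed interactive quantum game and then invoking Gutoski's parallel-repetition theorem~\cite[Theorem~4.9]{Gut10}.

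\textbf{Step 1: the referee strategy for one copy.} For each $i\in[\secp]$ we describe a referee $R_i$. It samples $(a_i,x_i)$ and a truth table for $H_i$. It sends $U_{a_i}\ket{x_i}$ and $z_i=H_i(a_i,x_i)$, then answers $r$ rounds of $p$-parallel queries by applying $O_{H_i}^{\otimes p}$ to the received query registers and returning them. Finally it receives a classical pair and accepts iff that pair equals $(a_i,x_i)$.

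The randomness of $H_i$ and $(a_i,x_i)$ can be purified into a private referee register. This makes $R_i$ a fixed quantum strategy with $r+1$ rounds of interaction in Gutoski's sense. The oracle is information-theoretic, and the adversary is unbounded, so nothing computational is lost.

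\textbf{Step 2: the single-copy value.} Any adversary against $R_i$ alone is exactly an adversary in experiment $G$ of \Cref{lem:hidden-pair-list-oracle}. Hence the maximum acceptance probability of $R_i$ over all co-strategies is at most $\beta_{r,p,\varepsilon,N}$.

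\textbf{Step 3: applying the repetition theorem.} An adversary for the corollary interacts with the tensor product $R_1\otimes\cdots\otimes R_\secp$, synchronized round by round. We will justify this as follows. The model forbids superposition queries across different oracles. In round $t$, the adversary therefore sends $p$ query registers to each $H_i$ simultaneously, possibly entangled across $i$. This is precisely a joint co-strategy for the product of the $\secp$ referees.

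If the adversary queries different oracles in different rounds, we pad with dummy queries so that every round has width exactly $p$ per oracle. Gutoski's theorem states that the maximum probability that all referees accept equals the product of the individual maxima. This yields $\beta^{\secp}$.

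\textbf{Main obstacle.} The difficulty lies in Step 3: checking that the hypotheses of~\cite[Theorem~4.9]{Gut10} match our setting. Three points need verification.
\begin{itemize}
\item The referees must be independent, with no shared randomness. This holds because the $H_i$ and $(a_i,x_i)$ are sampled independently.
\item Each referee's final acceptance must be a measurement on its own registers together with the adversary's final message for that copy.
\item The adversary's final output must split into per-copy messages $(a'_i,x'_i)$, and "all accept" must be the event $(a_i',x_i')=(a_i,x_i)$ for every $i$.
\end{itemize}
If the round structure needs more care, the fallback is to merge each round's $p$ queries to a single oracle into one message. Round counts then agree across copies, and the number of rounds does not affect the product bound.
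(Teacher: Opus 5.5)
Your proposal is correct and follows the paper's proof: you model each copy as a fixed interactive referee that privately holds $(a_i,x_i)$ and the finite table of $H_i$, bound its optimal acceptance probability by $\beta_{r,p,\varepsilon,N}$ via \Cref{lem:hidden-pair-list-oracle}, and apply Gutoski's all-success parallel-repetition theorem to the tensor product of the $\secp$ referees. Your remarks on purification, round-by-round synchronization, and padding each oracle to exactly $p$ queries per round are harmless elaborations of details the paper leaves implicit.
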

\begin{proof}
View one copy as an interactive verifier that privately samples $(a,x)$ and the complete finite table of $H$, sends $U_a\ket{x}$ and $z=H(a,x)$, and answers $r$ batches of at most $p$ parallel oracle queries. It accepts precisely when the prover's final pair equals $(a,x)$. Its optimal acceptance probability $v$ is at most $\beta_{r,p,\varepsilon,N}$ by~\Cref{lem:hidden-pair-list-oracle}.

The repeated verifier consists of $\secp$ independent copies and accepts only if every copy accepts. Its accepting verifier operator is therefore the tensor product of the single-copy accepting operators. Gutoski's all-success parallel-repetition theorem~\cite[Theorem~4.9]{Gut10} bounds its acceptance probability by $v^{\secp}\leq\beta_{r,p,\varepsilon,N}^{\secp}$, even for an arbitrary joint quantum prover entangling all copies.
\end{proof}

\subsection{Security Proof}

We prove security. The proof is very similar to that of~\Cref{lem:hidden-pair-list-oracle}.

\begin{lemma}[Security] \label{lem:security}
Let $T=T(\secp)\geq3$ be a polynomially bounded integer delay, let $p=p(\secp)\geq1$ be a polynomially bounded integer query width, and let $r=r(\secp)=o(T)$ be a nonnegative integer number of rounds. Any adversary making $r$ rounds of $p$-parallel queries against \Cref{con:lock} satisfies
\begin{equation}
  \Pr[\sol'=\sol]\leq\frac12+2r\sqrt{p\beta_{r,p,\varepsilon,N}^{\secp}}=\frac12 + \mathrm{negl}(\secp),
  \label{eq:security-bound}
\end{equation}
where $\beta_{r,p,\varepsilon,N}$ is as in \Cref{cor:hidden-pair-parallel-repetition} with $\varepsilon=2^{-\secp/2}$.
\end{lemma}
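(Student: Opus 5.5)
The plan is a one-way-to-hiding hybrid on the mask oracle $F$, mirroring the proof of \Cref{lem:hidden-pair-list-oracle}. We reprogram $F$ at the hidden tuple $w:=(a_1,x_1,\ldots,a_\secp,x_\secp)$ and bound the distinguishing advantage by the probability that $w$ shows up among the adversary's $F$-queries. That probability is then controlled by the amplified one-wayness of \Cref{cor:hidden-pair-parallel-repetition}.

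First we rewrite the real experiment. Sample $\sol\gets\{0,1\}$, an independent $F$, and a uniform bit $c$. Define $\widetilde F$ to equal $F$ except that $\widetilde F(w):=c\oplus\sol$. Giving the adversary the puzzle with bit $c$ and oracle $\widetilde F$ is then distributed exactly as the real experiment. The oracles $H_i$ are generated honestly, and all tags are $z_i=H_i(a_i,x_i)$.

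For $j\in\{1,\ldots,r+1\}$, let $\Hyb(j)$ answer the $F$-queries in the first $j-1$ rounds with $F$ and all later rounds with $\widetilde F$. Then $\Hyb(1)$ is the real experiment. In $\Hyb(r+1)$ the adversary's view (tokens, tags, $c$, $H_i$, $F$) is independent of $\sol$, so its success probability is exactly $1/2$.

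For each adjacent hop, let $\delta_j^2$ be the probability that measuring the $F$-query registers just before round $j$ produces $w$. The two oracles differ only at $w$, so the hybrid argument of \cite{BBBV97,AHU19} bounds the hop by $2\delta_j$; this is the same step as in \Cref{lem:hidden-pair-list-oracle}.

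To bound $\delta_j^2$ we build an extractor for the game in \Cref{cor:hidden-pair-parallel-repetition}. The extractor receives the tokens and tags, samples $c$ and $F$ itself, and simulates the adversary for $j-1$ rounds. It forwards $H_i$-queries to its own oracles, which uses at most $p$ parallel queries per copy per round, and answers $F$-queries by simulation. It then measures one of the $p$ $F$-query registers, chosen uniformly at random, and outputs the measured tuple. By a union bound over the $p$ registers, the extractor succeeds with probability at least $\delta_j^2/p$. Since it uses fewer than $r$ rounds, \Cref{cor:hidden-pair-parallel-repetition} gives $\delta_j^2\leq p\,\beta_{r,p,\varepsilon,N}^{\secp}$. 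Summing the $r$ hops yields $\Pr[\sol'=\sol]\le\frac12+2r\sqrt{p\beta_{r,p,\varepsilon,N}^{\secp}}$.

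It remains to show this is $\frac12+\negl(\secp)$. Here $N=2^k$, and $T\geq3$ gives $k\geq2$, so $1/2^k\leq1/4$. Also $2^{k/2}\geq (T+1)/4$, hence $2r\sqrt{(1+(p-1)2^{-\secp/2})/2^k}=O(r/T)+\negl(\secp)=o(1)$ because $r=o(T)$ and $p$ is polynomial. Therefore $\beta_{r,p,\varepsilon,N}\leq1/2$ for large $\secp$, and $2r\sqrt{p}\,2^{-\secp/2}$ is negligible since $r$ and $p$ are polynomial.

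The main obstacle is the hybrid step with a parallel query register. We must check that the two operators $O_F^{\otimes p}$ and $O_{\widetilde F}^{\otimes p}$ differ in norm on the pre-round state by at most $2\delta_j$, with $\delta_j^2$ the probability that some register contains $w$. We must also check that the extractor's simulation of $F$, and of the still-unreprogrammed prefix, is perfect even though $c$ is sampled by the extractor. That holds because before reprogramming, $c$ is independent of everything else.
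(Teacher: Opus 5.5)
Your proposal is correct and follows the paper's proof essentially step for step. It uses the same round-by-round reprogramming of $F$ at the hidden tuple to a uniform mask, the same $2\delta_j$ hybrid estimate per hop, the same reduction to \Cref{cor:hidden-pair-parallel-repetition} with a factor-$p$ loss (your uniformly-random-register extractor is just the explicit form of the paper's union bound), and the same parameter calculation showing $\beta_{r,p,\varepsilon,N}\le 1/2$ for large $\secp$.
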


\begin{proof}
We introduce a sequence of hybrids, which gradually decouples the mask $\msk$ from the oracle $F$.

\noindent $\Hyb(j)$ for $j \in \{1,\ldots,r+1\}$:
\begin{enumerate}
\item $H_i$ for $i\in[\secp]$ and $F$ are independent random oracles.
\item The challenger samples $\sol \gets \{0,1\}$.
\item For every $i \in [\secp]$, the challenger samples $a_i \gets \{0,1\}^k$ and $x_i \gets \{0,1\}^n$, and sets $\ket{\psi_i} = U_{\mathcal{C}_{\secp,k}(a_i)}\ket{x_i}$ and $z_i = H_i(a_i,x_i)$.
\item Set $\msk \gets \{0,1\}$ and $\ket{\puz} = \bigotimes_{i\in[\secp]}\ket{z_i} \ket{\psi_i} \otimes \ket{\sol \oplus \msk}$.
\item The challenger sends $\ket{\puz}$ to the adversary $A$. 
\item The adversary $A$ makes $r$ rounds of $p$-parallel queries in the following way: right before the $j$-th round of queries, the challenger reprograms $F$ at $(a_1,x_1,\allowbreak\ldots,\allowbreak a_\secp,x_\secp)$ to $\msk$.
\item The adversary $A$ outputs $\sol' \in \{0,1\}$.
\item Output $1$ if and only if $\sol'=\sol$.
\end{enumerate}
Informally, the proof follows this chain:
\begin{equation}
  \Real \equiv\ \Hyb(1)
  \ \approx\  \Hyb(2)\ \approx\ \cdots\ \approx\  \Hyb(r+1), \quad \Pr[\Hyb(r+1) = 1] = \frac{1}{2}.
  \label{eq:main-game-chain}
\end{equation}
Immediately before round $j$, measure the addresses of the at most $p$ queries to $F$. At this point, $F$ has not been reprogrammed, so it and the masked bit can be simulated without the hidden tuple $(a_1,x_1,\ldots,a_\secp,x_\secp)$. \Cref{cor:hidden-pair-parallel-repetition} bounds the probability that any fixed candidate tuple equals $(a_1,x_1,\ldots,a_\secp,x_\secp)$ by $\beta_{r,p,\varepsilon,N}^{\secp}$. A union bound therefore gives probability at most $p\beta_{r,p,\varepsilon,N}^{\secp}$ that any query address equals the hidden tuple. Applying the same hybrid estimate as above yields
\[
  \left|\Pr[\Hyb(j)=1]-\Pr[\Hyb(j+1)=1]\right|
  \leq2\sqrt{p\beta_{r,p,\varepsilon,N}^{\secp}}.
\]
In $\Hyb(r+1)$, the mask is uniform and independent of the adversary's view apart from the masked bit, so the winning probability is exactly $1/2$. Summing the $r$ adjacent-hop bounds gives the numerical bound in \Cref{eq:security-bound}.

To establish negligibility under the stated hypotheses, write $N:=2^k$
and $\delta_\secp:=(p-1)2^{-\secp/2}=o(1)$. Our parameter choice gives $N\geq4$ and $\sqrt N>(T+1)/4$. Therefore,
\[
\begin{aligned}
  \beta_{r,p,\varepsilon,N}
  &=\frac{1}{N}
    +2r\sqrt{\frac{1+\delta_\secp}{N}}\\
  &\leq\frac{1}{4}
    +\frac{8r\sqrt{1+\delta_\secp}}{T+1}
   =\frac14+o(1).
\end{aligned}
\]
Thus $\beta_{r,p,\varepsilon,N}\leq1/2$ for all sufficiently large $\secp$, and
\[
  2r\sqrt{p\beta_{r,p,\varepsilon,N}^{\secp}}
  \leq 2r\sqrt p\,2^{-\secp/2}
  =\mathrm{negl}(\secp),
\]
since $r$ and $p$ are polynomially bounded. This proves \Cref{eq:security-bound}.
\end{proof}

\section*{Acknowledgements}
Prabhanjan Ananth is supported by the National Science Foundation under grants FET-2329938, CAREER-2341004, and FET-2530160.

\printbibliography

\end{document}